\documentclass[aps,prx,reprint,superscriptaddress,longbibliography,nofootinbib,floatfix]{revtex4-2}
\usepackage[dvipsnames]{xcolor}
\usepackage[colorlinks = true,hyperfootnotes = true,breaklinks = true,citecolor = NavyBlue,linkcolor = NavyBlue,urlcolor = NavyBlue]{hyperref}

\usepackage{amsmath,amssymb,physics,enumerate,graphicx,placeins,bm,adjustbox,amsthm,tabularx,dcolumn,bbold,dsfont}

\usepackage[T1]{fontenc}

\newcommand{\cZ}{\mathcal{Z}}
\newcommand{\cS}{\mathcal{S}}

\newcommand{\cC}{\mathcal{C}}
\newcommand{\cO}{\mathcal{O}}
\newcommand{\cH}{\mathcal{H}}

\newcommand{\cP}{\mathcal{P}}

\newcommand{\cG}{\mathcal{G}}

\newcommand{\cE}{\mathcal{E}}
\newcommand{\cN}{\mathcal{N}}
\newcommand{\cL}{\mathcal{L}}
\newcommand{\cD}{\mathcal{D}}
\newcommand{\cJ}{\mathcal{J}}
\newcommand{\cM}{\mathcal{M}}
\newcommand{\cB}{\mathcal{B}}

\newcommand{\dI}{\mathds{1}}

\newcommand{\ii}{\mathrm{i}}

\DeclareMathOperator{\sgn}{sgn}

\newtheorem{theorem}{Theorem}

\begin{document}

\title{Bath-state (In)dependence of Non-Markovianity Measures in Quadratic Open Systems}

\author{Mirko Consiglio}
\email{mirko.consiglio@um.edu.mt}
\affiliation{Department of Physics, University of Malta, Msida MSD 2080, Malta }

\author{Ilia Khomchenko}
\email{ilia.khomchenko@um.edu.mt}
\affiliation{Department of Physics, University of Malta, Msida MSD 2080, Malta }

\author{Tony J. G. Apollaro}
\email{tony.apollaro@um.edu.mt}
\affiliation{Department of Physics, University of Malta, Msida MSD 2080, Malta }

\date{\today}

\begin{abstract}
    Open quantum system dynamics fail to fulfill the conditions of the Markov approximation when the environment and system timescales are close to each other. Such behavior of these systems can be characterized by several non-Markovianity measures, which are expected to vary with the environment parameters. In this work, we consider a fermionic quadratic open quantum system, establishing bath-state independence of the Breuer--Laine--Piilo and the Lorenzo--Plastina--Paternostro non-Markovianity measures. However, our analytical results reveal that the entanglement-based and divisibility-based Rivas--Huelga--Plenio measures, as well as the entropic Megier--Smirne--Vacchini measure, remain sensitive to the reservoir state, through both its temperature and its chemical potential. Modeling the reservoir as a free-fermionic chain, we derive an exact time-local master equation for a single fermionic site in contact with it. Our numerical results for this model confirm these findings and show that the BLP and LPP measures vanish identically only on the bound-state threshold, where the bath-sensitive measures nevertheless remain finite.
\end{abstract}

\keywords{Open Quantum Systems, Non-Markovianity}

\maketitle

\section{Introduction}

Realistic quantum systems unavoidably interact with their surroundings~\cite{Haroche2006}. To recover from the interaction, the environment needs some time, during which information may continuously flow back from the environment to the quantum system. Such behavior of a quantum system under the influence of its surroundings is often referred to as non-Markovian~\cite{Vega2017}.

Non-Markovian effects arise for a broad scope of quantum systems, including spin chains~\cite{Apollaro2011}, Bose--Einstein condensates~\cite{Haikka2011}, micromechanical oscillators~\cite{Groblacher2015}, chaotic systems~\cite{Zznidaricc2011, Garcia-mata2012}, energy transfer in chemical systems~\cite{Rebentrost2011}, initial system--environment correlations in photonic experiments~\cite{Smirne2011}, continuous-variable quantum key distribution~\cite{Vasile2011}, and quantum metrology~\cite{Chin2012}.

Non-Markovianity is quantified by measures such as the Breuer--Laine--Piilo (BLP)~\cite{Breuer2009}, Lorenzo--Plastina--Paternostro (LPP)~\cite{Lorenzo2013} and Rivas--Huelga--Plenio (RHP)~\cite{Rivas2010} measures~\cite{Rivas2014}, all of which apply directly to two-level systems. Non-Markovianity measures are expected to depend on the parameters of the environment, such as its temperature or chemical potential, since these characterize the environment. This fact has been reported for a range of bosonic systems, including the spin-boson model~\cite{Clos2012}, pure dephasing of a qubit~\cite{Haikka2013}, harmonic-chain reservoirs~\cite{Vasile2014}, and the Caldeira--Leggett model~\cite{Einsiedler2020, Manara2026, Pellitteri2026}, as well as in collisional models with thermal qubit environments~\cite{Man2018}, where the results demonstrated a dependence of non-Markovianity measures on reservoir temperature. For fermionic reservoirs, the question has been studied numerically for a two-level emitter coupled to an XY spin chain, which maps onto free fermions~\cite{Roos2020}: for coupling at the edge of the chain, the trace-distance measure was found to be the same for all environment states that conserve the excitation number and to vanish at a specific detuning, while divisibility could still be broken there. The general mechanism behind these observations, and their scope beyond that model and that class of states, have not been established. Our work aims to fill this gap by showing how non-Markovianity measures respond to the temperature and chemical potential of a fermionic reservoir.

Specifically, we consider a single fermionic site coupled to a fermionic quadratic reservoir through an arbitrary quadratic Hamiltonian. For this system we prove that the BLP measure, based on the trace distance between pairs of evolving states, and the LPP measure, based on the volume of accessible states, are independent of the bath state, and in particular of its temperature and chemical potential. By contrast, the Megier--Smirne--Vacchini (MSV) measure~\cite{Megier2021}, based on the quantum Jensen--Shannon divergence (QJSD), and the entanglement-based (RHPE) and divisibility-based (RHP) Rivas--Huelga--Plenio measures are sensitive to the bath state, through both its temperature and its chemical potential, and we identify a parameter regime in which the BLP and LPP measures vanish identically while all three bath-sensitive measures remain finite.

The paper is organized as follows. Section~\ref{sec:quadratic} describes quadratic system--bath models and presents mathematical proofs of the bath-state independence of the BLP and LPP non-Markovianity measures, as well as of the dependence of the QJSD distinguishability measure on the bath state. Section~\ref{sec:model} is concerned with a particular example of the general model: a chain of non-interacting fermions, acting as a thermal reservoir. Focusing on this fermionic environment, we also derive the master equation governing the evolution of a single fermionic site in contact with this thermal bath. Section~\ref{sec:results} presents numerical results for the spin-chain reservoir at three parameter points: above, on and below the bound-state threshold, together with a map of where the BLP and LPP measures vanish and a scan of the chemical potential. Section~\ref{sec:conclusion} discusses our key findings and outlines their significance. Appendix~\ref{app:dynamical_map} derives the dynamical map for the model of Sec.~\ref{sec:model}. Appendix~\ref{app:NM} derives the explicit forms of the five non-Markovianity measures used in Sec.~\ref{sec:results}. Finally, Appendix~\ref{app:longtime} derives the long-time behavior of the propagator below the threshold.

\section{Quadratic System--Bath Models} \label{sec:quadratic}

We consider a single fermionic site $\hat d$, representing a system, coupled to a reservoir of $K$ sites $\hat c_k$, with the system, reservoir and interaction together forming an arbitrary quadratic Hamiltonian $\hat H$, including pairing terms. The model is generically split as
\begin{equation}
    \hat{H} = \hat{H}_{S} + \hat{H}_{B} + \hat{H}_{SB},
    \label{eq:H}
\end{equation}
where $\hat{H}_S$ is the system Hamiltonian, $\hat{H}_B$ is the bath Hamiltonian, and $\hat{H}_{SB}$ is the interaction Hamiltonian. We write $\hat N_B = \sum_k \hat c^\dagger_k \hat c_k$ for the reservoir particle number, and fix the system basis by taking $\ket{0}$ to be the vacuum and $\ket{1} = \hat d^\dagger\ket{0}$ the occupied state, so that $\rho_{11} = \ev*{\hat d^\dagger\hat d}$ is the occupation of the system site. The model is sketched in Fig.~\ref{fig:quadratic}.

\begin{figure}[t]
    \centering
    \includegraphics[width = \linewidth]{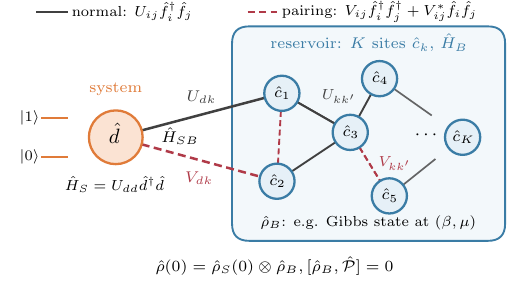}
    \caption{
    Quadratic system--bath model of Sec.~\ref{sec:quadratic}. A single fermionic site $\hat d$, with empty and occupied states $\ket{0}$ and $\ket{1}$, is coupled to a reservoir of $K$ sites $\hat c_k$; system, reservoir and coupling together form an arbitrary quadratic Hamiltonian~\eqref{eq:quadratic}, with normal terms $U_{ij}$ (solid lines) and pairing terms $V_{ij}$ (dashed), $\hat f_i\in\{\hat d,\hat c_1,\dots,\hat c_K\}$. System and reservoir start uncorrelated, and the reservoir state $\hat\rho_B$ need only be parity even, for instance a Gibbs state at inverse temperature $\beta$ and chemical potential $\mu$.
    }
    \label{fig:quadratic}
\end{figure}

Two conditions are imposed on the reservoir. The first is that system and bath are initially uncorrelated, $\hat\rho(0) = \hat\rho_S(0)\otimes\hat\rho_B$. The second is that $\hat\rho_B$ commutes with the parity operator
\begin{equation}
    \hat \cP = (-1)^{\hat N_B} = \prod_{k}(1 - 2\hat c_k^\dagger\hat c_k),
    \label{eq:parity}
\end{equation}
$[\hat\rho_B,\hat \cP] = 0$, i.e.\ $\hat\rho_B$ is block diagonal in the sectors of even and odd fermion number; we call such a state \emph{parity even}. This does not require an even number of fermions: a thermal mixture of even and odd occupations qualifies, and what is excluded is only coherence between the two sectors, such as a superposition of $\ket{0}$ and $\hat c_k^\dagger\ket{0}$. For a parity-even state, expectation values of operators odd in the bath operators vanish: if $\hat \cP\hat O\hat \cP = -\hat O$, then $\ev*{\hat O} = \Tr\{\hat\rho_B\hat \cP\hat O\hat \cP\} = -\ev*{\hat O}$. This holds automatically for the Gibbs state:
\begin{equation}
    \hat \rho_B = \frac{e^{-\beta(\hat{H}_{B}-\mu\hat N_B)}}{\cZ},
    \label{eq:gibbs}
\end{equation}
since $\hat \cP$ maps $\hat c_k \to -\hat c_k$, leaving any quadratic Hamiltonian invariant, so $[\hat H_B,\hat \cP] = [\hat N_B,\hat \cP] = 0$ and hence $[\hat\rho_B,\hat \cP] = 0$. More generally it is guaranteed for any fermionic state by the parity superselection rule~\cite{Wick1952}, so the condition is not a restriction on the class of reservoirs considered. Stationarity, $[\hat H_B, \hat \rho_B] = 0$, is not required by the derivation below, but is what makes $\hat\rho_B$ describe a reservoir \emph{held} at fixed temperature $T$ and chemical potential $\mu$ rather than one that evolves on its own. The commutation of the bath state and Hamiltonian holds for the grand-canonical Gibbs state whenever the bath block is number conserving; if the reservoir carries anomalous terms, one must set $\mu = 0$ and take $\hat\rho_B = e^{-\beta\hat H_B}/\cZ$. Anomalous terms in the system and system--bath blocks are unrestricted throughout.

A generic quadratic Hamiltonian can be defined as
\begin{equation}
    \hat H = \sum_{i, j} U_{ij} \hat f^\dagger_i \hat f_j + \frac{1}{2} \sum_{i, j}(V_{ij} \hat f^\dagger_i \hat f^\dagger_j + V_{ij}^* \hat f_i \hat f_j),
    \label{eq:quadratic}
\end{equation}
where $U = U^\dagger$, the antisymmetry $V = -V^{T}$ follows from the fermionic anticommutation of $\hat f^\dagger_i$ and $\hat f^\dagger_j$, and $\{\hat f_i\} = \{\hat d, \hat c_1,\dots,\hat c_K\}$ collects the system and reservoir sites. This is the Bogoliubov--de Gennes form of a quadratic fermionic Hamiltonian~\cite{Degennes2018, Ring1980}. We now construct the Nambu spinor~\cite{Nambu1960}, $\boldsymbol{\Psi} = (\hat d, \hat c_1, \dots, \hat c_K, \hat d^\dagger, \hat c^\dagger_1, \dots, \hat c^\dagger_K)^T$. A quadratic Hamiltonian can then be rewritten as
\begin{equation}
    \hat H = \frac{1}{2}\boldsymbol{\Psi}^\dagger \cH \boldsymbol{\Psi} + \text{const},~\cH = \begin{pmatrix}
        U & V \\
        -V^* & -U^T
    \end{pmatrix}.
\end{equation}
For a quadratic Hamiltonian, the Heisenberg dynamics are linear $\boldsymbol{\Psi}(t) = e^{-\ii \cH t}\boldsymbol{\Psi}(0)$, whose system row is
\begin{align}
    \hat d(t) &= G(t)\hat d + F(t)\hat d^\dagger + \hat \xi(t), \label{eq:heisenbergC} \\
    \hat \xi(t) &= \sum_k(g_k(t)\hat c_k + \tilde g_k(t)\hat c^\dagger_k).
\end{align}
Since $\hat\xi$ involves only reservoir operators, the canonical anticommutation relations give $G(t) = \ev*{\{\hat d(t),\hat d^\dagger\}}$ and $F(t) = \ev*{\{\hat d(t),\hat d\}}$. Both are fixed entirely by the single-particle matrix $\cH$, so the state of the reservoir enters neither. They are the normal and anomalous propagators of the Bogoliubov--de Gennes problem, defined for $t \geq 0$ with $G(0) = 1$ and $F(0) = 0$, and related to the retarded Green functions by $G^{\rm R}(t) = -\ii\Theta(t)G(t)$ and $F^{\rm R}(t) = -\ii\Theta(t)F(t)$, with $\Theta$ the Heaviside step function~\cite{Ring1980, Economou2006, Degennes2018}.

The expansion of $\hat\xi$ is written in the site basis $\hat c_k$, in which $\hat H$ is defined, rather than in the eigenmode basis that diagonalizes $\hat H_B$. Nothing below requires the bath correlators to be diagonal: the argument uses only that $Z_\beta(t) \equiv \ev*{\hat\xi^\dagger(t)\hat\xi(t)}$ is a real number, independent of the initial state of the system, and never needs its explicit form. For the number-conserving chain of Sec.~\ref{sec:model}, the eigenmode expansion is available, and Appendix~\ref{app:dynamical_map} uses it to evaluate $Z_\beta(t)$ explicitly.

Propagating the coherence $\ev*{\hat d}$ and the population $\ev*{\hat d^\dagger \hat d}$ results in
\begin{equation}
    \ev*{\hat d(t)} = G(t)\ev*{\hat d} + F(t) \ev*{\hat d^\dagger},
    \label{eq:coh}
\end{equation}
and
\begin{widetext}
\begin{align}
    \ev*{\hat d^\dagger(t) \hat d(t)} &= \ev*{[G^*(t)\hat d^\dagger + F^*(t)\hat d + \hat \xi^\dagger(t)][G(t)\hat d + F(t)\hat d^\dagger + \hat \xi(t)]} \nonumber \\
    &= |G(t)|^2\ev*{\hat d^\dagger \hat d} + G^*(t)F(t)\ev*{\hat d^\dagger \hat d^\dagger} + G^*(t)\ev*{\hat d^\dagger \hat \xi(t)} + G(t)F^*(t)\ev*{\hat d \hat d} + |F(t)|^2\ev*{\hat d \hat d^\dagger} \nonumber \\
    &~~~+ F^*(t)\ev*{\hat d \hat \xi(t)} + G(t)\ev*{\hat \xi^\dagger(t) \hat d} + F(t)\ev*{\hat \xi^\dagger(t) \hat d^\dagger} + \ev*{\hat \xi^\dagger(t)\hat \xi(t)} \nonumber \\
    &= |G(t)|^2\ev*{\hat d^\dagger \hat d} + |F(t)|^2\ev*{\hat d \hat d^\dagger} + Z_\beta(t),
    \label{eq:pop}
\end{align}
\end{widetext}
since $\ev*{\hat d^\dagger \hat d^\dagger} = \ev*{\hat d \hat d} = 0$ by the canonical anticommutation relations, while every remaining cross term contains exactly one factor of $\hat\xi$ or $\hat\xi^\dagger$ and therefore factorizes, on the product initial state, into $\ev*{\cdot}_S\ev*{\hat\xi}_B = 0$ by parity. (If the system state is parity even, as for any $\hat\rho_S(0)$ diagonal in the occupation basis, these terms vanish for the system factor as well.) Note that this is the only property of the bath state used anywhere in the derivation. Then we get
\begin{align}
    \rho_{10}(t) &= G(t)\rho_{10}(0) + F(t)\rho_{01}(0), \\
    \rho_{01}(t) &= G^*(t)\rho_{01}(0) + F^*(t)\rho_{10}(0),
\end{align}
and
\begin{align}
    \rho_{11}(t) &= (|G(t)|^2 + Z_\beta(t))\rho_{11}(0) \nonumber \\
    &~~~+ (|F(t)|^2 + Z_\beta(t))\rho_{00}(0), \\
    \rho_{00}(t) &= (1 - |G(t)|^2 - Z_\beta(t))\rho_{11}(0) \nonumber \\
    &~~~+ (1 - |F(t)|^2 - Z_\beta(t))\rho_{00}(0).
\end{align}
Thus the dynamical map, written as a matrix acting on the vectorized state $\hat\rho \mapsto (\rho_{00}, \rho_{10}, \rho_{01}, \rho_{11})^T$, becomes
\begin{equation}
    \cM_t =
    \begin{pmatrix}
        1 - |F|^2 - Z_\beta & 0 & 0 & 1 - |G|^2 - Z_\beta \\
        0 & G & F & 0 \\
        0 & F^* & G^* & 0 \\
        |F|^2 + Z_\beta & 0 & 0 & |G|^2 + Z_\beta
    \end{pmatrix},
    \label{eq:mapC}
\end{equation}
omitting the dependence on time $t$. Complete positivity of Eq.~\eqref{eq:mapC} requires no separate proof: $\cM_t$ is obtained by tracing out the reservoir from unitary evolution of the product state $\hat\rho_S(0)\otimes\hat\rho_B$, and is therefore completely positive and trace preserving by construction.

\subsection{Measures of non-Markovianity} \label{sec:NM}

To characterize the degree of non-Markovianity in the system, there are several measures. We consider five of them: two built on the distinguishability of a pair of evolving states, one on the volume of accessible states, one on entanglement with an ancilla, and one on divisibility of the map. The first is the trace-distance-based BLP measure~\cite{Breuer2009},
\begin{align}
    \cN_\text{BLP} = \max_{\hat\rho_{1,2}(0)} \int_0^\infty \max\left\{0, \partial_t D(\hat\rho_1(t), \hat\rho_2(t))\right\} \dd{t}, \\
    D(\hat\rho_1(t), \hat\rho_2(t)) = \frac{1}{2}\lVert\hat\rho_1(t) - \hat\rho_2(t)\rVert_1.
\end{align}
The second is the LPP measure~\cite{Lorenzo2013}, based on the volume of accessible states,
\begin{align}
    \cN_\text{LPP} &= \int_0^\infty \max\left\{0, \partial_t \det\mathbf{A} \right\} \dd{t}, \\
    (\mathbf{A})_{ij} &= \frac{\Tr{\hat\sigma_i \cM_t(\hat\sigma_j)}}{2},~i, j \in \{x,y,z\}.
\end{align}
Thirdly, the quantum Jensen--Shannon-based entropic MSV measure~\cite{Megier2021, Vacchini2024},
\begin{align}
    \cN_\text{MSV} &= \max_{\hat\rho_{1,2}(0)}\int_0^\infty \max\left\{0,\partial_t\sqrt{J\left(\hat\rho_1(t),\hat\rho_2(t)\right)}\right\}\dd{t}, \\
    J(\hat\rho_1,\hat\rho_2) &= \cS\left(\frac{\hat\rho_1 + \hat\rho_2}{2}\right) - \frac{1}{2} \cS(\hat\rho_1) - \frac{1}{2} \cS(\hat\rho_2),
\end{align}
where $\cS$ is the von Neumann entropy. The fourth one is based on entanglement, the RHPE measure~\cite{Rivas2010},
\begin{align}
    \cN_\text{RHPE} &= \int_0^\infty \left| \partial_t \cE(\hat\rho_{SA}(t)) \right| \dd{t} - \Delta \cE, \\
    \Delta \cE &= \cE(\hat\rho_{SA}(0)) - \cE(\hat\rho_{SA}(\infty)),
\end{align}
where $\cE$ is an entanglement measure. The last one is based on divisibility, the RHP measure~\cite{Rivas2010},
\begin{align}
    \cN_\text{RHP} &= \int_0^\infty \lim_{\varepsilon \to 0^+} \frac{\lVert(\cM_{(t + \varepsilon, t)} \otimes \dI)\ketbra{\Psi}{\Psi}\rVert_1 - 1}{\varepsilon} \dd{t}, \\
    \cM_{(t + \varepsilon, t)} &\equiv \cM_{t + \varepsilon} \cM_t^{-1}.
\end{align}

These five measures are inequivalent and admit only a partial ordering~\cite{Li2018}; comparisons of several measures on a common model have been carried out in Refs.~\cite{Addis2014, Chruscinski2011, Apollaro2014}. They capture different aspects of non-Markovianity, but all vanish for CP-divisible dynamics: if $\cN_\text{RHP} = 0$, the other four measures are zero as well~\cite{Rivas2014, Megier2021}. They are restated in Appendix~\ref{app:NM}, together with their explicit forms for the model of Sec.~\ref{sec:model}.

\subsection{Bath-State Independence of BLP and LPP} \label{sec:independence}

We now show that two of the five measures of Sec.~\ref{sec:NM} are independent of the state of the reservoir --- and hence of its temperature and chemical potential --- for every quadratic Hamiltonian of the form~\eqref{eq:quadratic}. The essential structure of Eq.~\eqref{eq:mapC} is that the bath state enters it in one place only. The map decomposes as
\begin{equation}
    \cM_t(\hat \rho) = \cG_t(\hat \rho) + Z_\beta(t)\Tr{\hat \rho}\hat\sigma^z,
    \label{eq:affine}
\end{equation}
with $\hat\sigma^z = \ketbra{1}{1} - \ketbra{0}{0} = 2\hat d^\dagger\hat d - 1$, and where $\cG_t$ is the linear map obtained from Eq.~\eqref{eq:mapC} by setting
$Z_\beta \to 0$,
\begin{equation}
    \cG_t =
    \begin{pmatrix}
        1 - |F(t)|^2 & 0 & 0 & 1 - |G(t)|^2 \\
        0 & G(t) & F(t) & 0 \\
        0 & F^*(t) & G^*(t) & 0 \\
        |F(t)|^2 & 0 & 0 & |G(t)|^2
    \end{pmatrix}.
    \label{eq:homblock}
\end{equation}
Both pieces are separately trace preserving: $\Tr{\cG_t(\hat \rho)} = \Tr{\hat \rho}$, while $\hat\sigma^z$ is traceless. By Eq.~\eqref{eq:heisenbergC}, $\cG_t$ is built solely from the normal and anomalous propagators $G(t)$ and $F(t)$, which are fixed by the Heisenberg equations and carry no dependence on the bath state. \emph{All} of the $\beta$- and $\mu$-dependence of the dynamics therefore resides in the single scalar $Z_\beta(t)$ multiplying the state-independent shift in Eq.~\eqref{eq:affine}.

\emph{BLP.} The trace distance depends on the pair of inputs only through their difference, and $\hat \rho_1(0) - \hat \rho_2(0)$ is traceless. Hence, for \emph{any} pair of input states,
\begin{equation}
    \cM_t(\hat \rho_1) - \cM_t(\hat \rho_2) = \cG_t\left(\hat \rho_1 - \hat \rho_2\right),
    \label{eq:blpcancel}
\end{equation}
the shift having canceled identically before any optimization is performed. Consequently, $D(\hat \rho_1(t), \hat \rho_2(t))$, the set of times on which it increases, and the maximum over input pairs are all determined by $\cG_t$ alone.

\emph{LPP.} The matrix $\mathbf{A}$ involved in the calculation of the LPP measure is built from $\Tr{\hat \sigma_i \cM_t(\hat \sigma_j)}/2$ with $i,j \in \{x,y,z\}$. The Pauli matrices are traceless, so $\cM_t(\hat \sigma_j) = \cG_t(\hat \sigma_j)$ and $\mathbf{A}$ is precisely the homogeneous block~\eqref{eq:homblock} written in the Bloch basis; the affine shift contributes only to the translation vector of the Bloch ball, to which $\det \mathbf{A}$ is insensitive. Explicitly,
\begin{align}
    &\mathbf{A} =
    \begin{pmatrix}
        \Re(G + F) & \Im(G - F) & 0 \\
        -\Im(G + F) & \Re(G - F) & 0 \\
        0 & 0 & |G|^2 - |F|^2
    \end{pmatrix}, \\
    &\det \mathbf{A} = \left(|G(t)|^2 - |F(t)|^2\right)^2.
\end{align}

We therefore have the following.

\begin{theorem}
    Let the reduced dynamics be of the affine form~\eqref{eq:affine}, with $\cG_t$ independent of the bath state. Then $\cN_\text{BLP}$ and $\cN_\text{LPP}$ are independent of $\hat \rho_B$, and in particular of $\beta$ and $\mu$. This holds for every quadratic Hamiltonian of the form~\eqref{eq:quadratic} whenever the system and bath are initially uncorrelated and the bath is parity even, $[\hat\rho_B,\hat\cP] = 0$ with $\hat\cP$ of Eq.~\eqref{eq:parity}, by Eqs.~\eqref{eq:heisenbergC}--\eqref{eq:mapC}.
    \label{thm:independence}
\end{theorem}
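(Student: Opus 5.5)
The proof splits into two parts, matching the two sentences of Theorem~\ref{thm:independence}. The first is abstract: assume only the affine decomposition~\eqref{eq:affine}, with $\cG_t$ independent of $\hat\rho_B$ and $Z_\beta(t)$ an arbitrary real scalar function. From this, show that both functionals are built only from $\cG_t$. The second part verifies the hypothesis for the quadratic setting, by collecting what Eqs.~\eqref{eq:heisenbergC}--\eqref{eq:mapC} already establish.

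\emph{BLP.} For any pair of inputs, $\hat\rho_1(0)-\hat\rho_2(0)$ is traceless. Linearity of $\cG_t$ then gives Eq.~\eqref{eq:blpcancel}, so
\[
D(\hat\rho_1(t),\hat\rho_2(t))=\tfrac12\lVert\cG_t(\hat\rho_1(0)-\hat\rho_2(0))\rVert_1 .
\]
This is a function of $t$ and of the input pair alone, with no reference to $\hat\rho_B$. The same holds for its time derivative (wherever it exists), for the positive part, for the time integral, and finally for the supremum over input pairs, since that supremum runs over a domain (pairs of system states) that does not involve the bath either. One point needs care. Although the set of admissible inputs is independent of $\hat\rho_B$, the outputs $\hat\rho_{1,2}(t)$ do depend on $Z_\beta$. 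Because $D$ is evaluated only through the difference, this dependence never enters, and I will state that explicitly.

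\emph{LPP.} Each $\hat\sigma_j$ is traceless, so $\cM_t(\hat\sigma_j)=\cG_t(\hat\sigma_j)$. The matrix elements $\Tr\{\hat\sigma_i\cM_t(\hat\sigma_j)\}/2$ therefore equal those of $\cG_t$. It follows that $\mathbf{A}$, its determinant, and $\int_0^\infty\max\{0,\partial_t\det\mathbf{A}\}\dd{t}$ are all bath-independent. As a consistency check I will quote $\det\mathbf{A}=(|G|^2-|F|^2)^2$, but the argument does not need this formula.

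\emph{Quadratic case.} It remains to confirm three things:
\begin{enumerate}[(i)]
\item the map has exactly the form~\eqref{eq:mapC};
\item $G,F$ are bath-independent;
\item the remainder can be written as $Z_\beta(t)\Tr\{\hat\rho\}\hat\sigma^z$.
\end{enumerate}
For (ii), $G(t)$ and $F(t)$ are matrix elements of $e^{-\ii\cH t}$. They can also be written as anticommutators with $\hat d^\dagger,\hat d$, which are c-numbers. In either form they depend only on $\cH$.

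For (iii), compare~\eqref{eq:mapC} with~\eqref{eq:homblock}. The difference adds $Z_\beta$ to the $\rho_{11}$ row and subtracts it from the $\rho_{00}$ row, with the same coefficient for both columns $\rho_{00}$ and $\rho_{11}$. This is precisely $Z_\beta(\rho_{00}+\rho_{11})\hat\sigma^z$.

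For (i), the key step is Eq.~\eqref{eq:pop}. There, cross terms with a single $\hat\xi$ factorize on the product initial state and vanish because $\ev*{\hat\xi}_B=0$, by parity evenness. One subtlety must be addressed: the factorization of fermionic operators across a tensor product requires a sign convention. Parity evenness of $\hat\rho_B$ ensures the bath factor is zero regardless of any sign. Finally, $Z_\beta=\ev*{\hat\xi^\dagger(t)\hat\xi(t)}_B$ is real and independent of $\hat\rho_S$, so it is a legitimate scalar.

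I expect the main obstacle to be less computational than definitional. The potentially delicate point is the BLP optimization, whose maximizing pair could in principle change with the bath. The resolution is that the whole objective functional is identical for every $\hat\rho_B$, so its supremum is identical too.
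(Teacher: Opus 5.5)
Your proposal is correct and follows the paper's argument essentially step for step. The $Z_\beta(t)\hat\sigma^z$ shift cancels on the traceless difference $\hat\rho_1(0)-\hat\rho_2(0)$ for every pair separately, so the maximization in $\cN_\text{BLP}$ is harmless, and it also cancels on the traceless Pauli matrices entering $\mathbf{A}$; the quadratic case then rests on the bath independence of $G$ and $F$ and on parity removing the single-$\hat\xi$ terms, including the $\langle\hat\xi(t)\rangle_B = 0$ you leave implicit in the coherence sector, Eq.~\eqref{eq:coh}.
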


By contrast, $\cN_\text{RHPE}$ and $\cN_\text{RHP}$ do depend on the shift, since both involve $Z_\beta(t)$ explicitly. Their explicit forms are derived in Sec.~\ref{sec:model} and Appendix~\ref{app:NM}.

Note that the argument uses only the affine structure~\eqref{eq:affine} together with the tracelessness of $\hat \rho_1 - \hat \rho_2$ and of the Pauli matrices. It nowhere invokes Gaussianity of $\hat \rho_B$, nor the explicit form of $Z_\beta(t)$, nor stationarity of the reservoir; quadratic Hamiltonians on fermionic parity-even baths are simply the family for which Eqs.~\eqref{eq:heisenbergC}--\eqref{eq:mapC} establish that structure. Nor does it require the optimization over input pairs to be carried out: Eq.~\eqref{eq:blpcancel} holds for every pair separately, so the trace distance of each pair, and hence their maximum, is independent of the bath state. The Gibbs state~\eqref{eq:gibbs} is one instance among many, and temperature independence is the special case of bath-state independence obtained by varying $\beta$ at fixed $\mu$.

For an emitter at the edge of an XY chain and environment states that commute with the excitation number, the bath-state independence of $\cN_\text{BLP}$ was observed numerically in Ref.~\cite{Roos2020}; Theorem~\ref{thm:independence} establishes it for arbitrary quadratic Hamiltonians, pairing terms included, for every parity-even reservoir state, and for $\cN_\text{LPP}$ as well.

Reading the theorem in reverse gives a criterion. If $\cN_\text{BLP}$ or $\cN_\text{LPP}$ is found to depend on the bath state, then the reduced dynamics cannot be of the affine form~\eqref{eq:affine} with a state-independent shift. Since fermionic states are parity even by superselection, only two escape routes remain: the Hamiltonian is not quadratic, containing quartic or higher-order terms; or the system and bath are initially correlated, so that the cross terms dropped after Eq.~\eqref{eq:pop} do not vanish. Either is therefore a \emph{necessary} condition for bath-state-dependent BLP or LPP measures. We have not shown that either is sufficient; whether some interacting or initially correlated model must exhibit temperature-dependent BLP or LPP measures remains an open question.

\subsection{Bath-State Dependence of the MSV Measure} \label{sec:qjsd}

Unlike the trace distance, $J$ is not a function of $\hat\rho_1 - \hat\rho_2$ alone, so the cancellation of the affine shift in Eq.~\eqref{eq:blpcancel} does not take place. We now make this explicit.

Consider the polar pair $\hat\rho_1(0) = \ketbra{1}{1}$, $\hat\rho_2(0) = \ketbra{0}{0}$. The coherence and population blocks of Eq.~\eqref{eq:mapC} do not mix, so both images stay diagonal, with occupations
\begin{equation}
    p_1(t) = |G(t)|^2 + Z_\beta(t),~p_2(t) = |F(t)|^2 + Z_\beta(t),
\end{equation}
since $p_i(t) = \Tr{\cM_t(\hat \rho_i(0))\hat d^\dagger \hat d}$. It is convenient to describe the pair by its mean occupation, $c(t)$, and its gap, $\Delta(t)$,
\begin{align}
    c(t) &= \frac{p_1(t) + p_2(t)}{2} = \frac{|G(t)|^2 + |F(t)|^2}{2} + Z_\beta(t), \label{eq:c} \\
    \Delta(t) &= p_1(t) - p_2(t) = |G(t)|^2 - |F(t)|^2,
\end{align}
so that the bath state enters the mean only: $\Delta$ is built from the propagators alone. The trace distance of this pair is $D = |\Delta|$, which is Eq.~\eqref{eq:blpcancel} for this particular pair. The QJSD, in contrast, is
\begin{equation}
    J_\text{pol} = h(c) - \frac{1}{2} h\left(c + \frac{\Delta}{2}\right) - \frac{1}{2} h\left(c - \frac{\Delta}{2}\right),
\end{equation}
with $h(x) = -x\log_2x - (1 - x)\log_2(1 - x)$ being the binary entropy function.

\begin{theorem}
    Let $0 < p_{1,2} < 1$ and $\Delta \neq 0$. Then (i) $\partial_{|\Delta|} J_\text{pol} > 0$: the QJSD grows with the magnitude of the occupation gap; (ii) $\sgn\partial_c J_\text{pol} = \sgn\left(c - 1/2\right)$: at fixed gap it grows as the pair moves away from half filling, and is stationary only at $c = 1/2$. Since $\partial c/\partial Z_\beta = 1$ and $\partial\Delta/\partial Z_\beta = 0$, $J_\text{pol}$ depends on the bath state except at $c = 1/2$.
    \label{thm:qjsd}
\end{theorem}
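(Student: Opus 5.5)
The plan is to differentiate $J_\text{pol}$ directly in the variables $(c,\Delta)$ and read off the signs from the strict concavity of the binary entropy. Write $x_\pm = c \pm \Delta/2$. These are exactly $p_1$ and $p_2$ (in some order), so the hypothesis $0<p_{1,2}<1$ guarantees that $h$ is smooth at $x_\pm$ and at $c$. We will use $h'(x) = \log_2\frac{1-x}{x}$, which is strictly decreasing, and $h''(x) = -\frac{1}{x(1-x)\ln 2}<0$.

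For part (i), note that $J_\text{pol}$ is even in $\Delta$, so $\partial_{|\Delta|}J_\text{pol} = \sgn(\Delta)\,\partial_\Delta J_\text{pol}$. We compute
\begin{equation}
\partial_\Delta J_\text{pol} = -\tfrac14\left[h'(x_+) - h'(x_-)\right].
\end{equation}
Since $h'$ is strictly decreasing and $x_+ - x_- = \Delta$, the bracket has sign $-\sgn\Delta$. Hence $\partial_{|\Delta|}J_\text{pol}>0$ whenever $\Delta\neq 0$.

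For part (ii), we have
\begin{equation}
\partial_c J_\text{pol} = h'(c) - \tfrac12\left[h'(x_+) + h'(x_-)\right].
\end{equation}
This is the amount by which $h'$ at the midpoint exceeds the average of $h'$ over the two symmetric endpoints. It is therefore governed by the convexity of $h'$, i.e.\ by the sign of $h'''$. A direct computation gives
\begin{equation}
h'''(x) = \frac{1-2x}{x^2(1-x)^2\ln 2}.
\end{equation}
So $h'$ is strictly convex on $(0,1/2)$ and strictly concave on $(1/2,1)$, with antisymmetry $h'(1-x) = -h'(x)$. This sign change of $h'''$ is not enough by itself, since the interval $[x_-,x_+]$ may straddle $1/2$, and that case is the main obstacle. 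To handle it I would define $\phi(c) = \partial_c J_\text{pol}$ at fixed $\Delta$ and use the antisymmetry of $h'$ to show that $\phi(1-c) = -\phi(c)$, so $\phi(1/2)=0$. It then suffices to prove $\phi(c)>0$ for $c>1/2$.

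The most concrete route to that last claim is to write out $\phi$ explicitly:
\begin{equation}
\phi(c) = \tfrac12\log_2\frac{(1-c)^2 x_+x_-}{c^2(1-x_+)(1-x_-)}.
\end{equation}
Using $x_+x_- = c^2-\Delta^2/4$ and $(1-x_+)(1-x_-) = (1-c)^2-\Delta^2/4$, the argument of the logarithm becomes
\begin{equation}
\frac{(1-c)^2\left(c^2-\Delta^2/4\right)}{c^2\left((1-c)^2-\Delta^2/4\right)}.
\end{equation}
This ratio exceeds $1$ exactly when $-(1-c)^2\Delta^2/4 > -c^2\Delta^2/4$. Given $\Delta\neq0$, that is equivalent to $c^2>(1-c)^2$, i.e.\ $c>1/2$. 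The signs of the denominators are positive because all of $x_\pm, 1-x_\pm$ lie in $(0,1)$. This algebraic identity bypasses the convexity subtlety entirely, so I would lead with it and keep the $h'''$ remark only as intuition. It gives $\sgn\phi = \sgn(c-1/2)$, with $\phi=0$ only at $c=1/2$.

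Finally, the bath-state statement follows from Eq.~\eqref{eq:c} and the definition of $\Delta$: $Z_\beta$ enters only through $c$, with $\partial c/\partial Z_\beta = 1$ and $\partial\Delta/\partial Z_\beta = 0$. The chain rule then gives $\partial_{Z_\beta}J_\text{pol} = \partial_c J_\text{pol}$, which is nonzero unless $c=1/2$, so $J_\text{pol}$ depends on the bath state except at $c=1/2$.
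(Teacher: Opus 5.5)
Your proposal is correct and follows essentially the same route as the paper: for (i) the same $\partial_\Delta J_\text{pol}$ combined with the strict monotonicity of $h'$, and for (ii) the same explicit log-ratio, where numerator minus denominator equals $\frac{\Delta^2}{4}(2c-1)$ and both factors are positive by $0<p_{1,2}<1$. The $h'''$ convexity and antisymmetry remarks are not needed once you use that identity, as you yourself note.
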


\begin{proof}
Throughout, $h'(x) = \log_2((1 - x)/x)$. Note first that $J_\text{pol}$ is invariant under $\Delta \to -\Delta$, which merely exchanges $p_1$ and $p_2$, so it is a function of $|\Delta|$ and we may take $\Delta > 0$ without loss of generality.

(i) Differentiating $J_\text{pol} = h(c) - h(c + \Delta/2)/2 - h(c - \Delta/2)/2$,
\begin{equation}
    \partial_\Delta J_\text{pol} = \frac{1}{4}\left(h'\left(c - \frac{\Delta}{2}\right) - h'\left(c + \frac{\Delta}{2}\right)\right) > 0,
\end{equation}
since $h'(x)$ is strictly decreasing.

(ii) Likewise
\begin{align}
    \partial_c J_\text{pol} &= h'(c) - \frac{1}{2}\left(h'\left(c + \frac{\Delta}{2}\right) + h'\left(c - \frac{\Delta}{2}\right)\right) \nonumber \\
    &= \frac{1}{2}\log_2\left(\frac{(1 - c)^2(c^2 - \frac{\Delta^2}{4})}{c^2\left[(1 - c)^2 - \frac{\Delta^2}{4}\right]}\right),
\end{align}
where $c^2 - \Delta^2/4 = p_1 p_2 > 0$ and $(1 - c)^2 - \Delta^2/4 = (1 - p_1)(1 - p_2) > 0$, so both arguments are positive. Subtracting the denominator from the numerator,
\begin{align}
    &(1 - c)^2(c^2 - \Delta^2/4) - c^2\left[(1 - c)^2 - \frac{\Delta^2}{4}\right] \nonumber \\
    &= \frac{\Delta^2}{4}\left[c^2 - (1 - c)^2\right] \nonumber \\
    &= \frac{\Delta^2}{4}(2c - 1),
\end{align}
so the ratio exceeds $1$ exactly when $c > 1/2$, equals $1$ at $c = 1/2$, and is smaller than $1$ when $c < 1/2$. This is the claim.
\end{proof}

Theorem~\ref{thm:qjsd} concerns a fixed pair of input states. Since $\cN_\text{MSV}$ is a maximum over pairs, pairwise dependence on $Z_\beta$ does not by itself exclude a bath-independent maximum; we close this gap below, by combining a strictly positive lower bound at finite temperature with the exact vanishing of $\cN_\text{MSV}$ in the unital limit.

\paragraph*{Revival criterion.} The pair is fixed by the two numbers $c$ and $|\Delta|$, so
\begin{equation}
    \partial_t J_\text{pol} = \partial_{|\Delta|} J_\text{pol} \partial_t|\Delta| + \partial_c J_\text{pol} \partial_t c.
    \label{eq:Jdot}
\end{equation}
Since $D = |\Delta|$, the trace distance responds to the first term of Eq.~\eqref{eq:Jdot} alone, and grows only when the gap widens. The second term has no counterpart for the trace distance: by Theorem~\ref{thm:qjsd} it is positive whenever
\begin{equation}
    \left(c - \frac{1}{2}\right)\partial_t c > 0,
\end{equation}
that is, whenever the pair drifts away from half filling. The QJSD can therefore revive while the gap shrinks, provided
\begin{equation}
    \partial_c J_\text{pol} \partial_t c > \left|\partial_{|\Delta|} J_\text{pol} \partial_t|\Delta|\right|,
\end{equation}
a possibility excluded for $D$ by construction.

\paragraph*{Infinite temperature.} Since the map is affine, the average of the two evolved occupations is the occupation of the evolved average input, and for the polar pair that average is the maximally mixed state: $c(t) = \Tr{\cM_t(\dI/2)\hat d^\dagger\hat d}$. Now, for a reservoir of $K$ sites, as $\beta \to 0$ at fixed $\mu$, the Gibbs state~\eqref{eq:gibbs} tends to the maximally mixed state of the reservoir, so that $\dI/2\otimes\dI/2^K$ is proportional to the identity on system and reservoir together and is therefore invariant under the unitary evolution. Tracing out the reservoir gives $\cM_t(\dI/2) = \dI/2$, so the reduced dynamics is unital and
\begin{equation}
    c(t) = \frac{1}{2}~\forall t,
\end{equation}
the stationary point of Theorem~\ref{thm:qjsd}(ii). More generally, from Eq.~\eqref{eq:c}, unitality fixes $Z_\beta(t) = (1 - |G(t)|^2 - |F(t)|^2)/2$ in terms of the propagators alone, so that at infinite temperature $J$ no longer depends on the reservoir state for any pair of initial states.

Two consequences follow. First, at $c = 1/2$, the polar-pair QJSD
\begin{equation}
    J_\text{pol} = 1 - h\left(\frac{1 + |\Delta|}{2}\right)
\end{equation}
is a strictly increasing function of the gap alone, so it revives exactly when the trace distance $D = |\Delta|$ of the same pair does. Second, the statement extends from this pair to all of them. Wherever $\cM_t$ is invertible, P-divisibility is equivalent to the trace norm of every Hermitian operator being non-increasing in time~\cite{Chruscinski2011}. For unital qubit dynamics the identity component of such an operator is left untouched, so this reduces to the requirement that the intermediate maps never expand the Bloch ball, which is exactly what $\cN_\text{BLP}$ monitors. The QJSD is an average of relative entropies to the midpoint of the two states, and the relative entropy is contractive under positive trace-preserving maps~\cite{Muller-hermes2017}; hence no pair can revive while the ball contracts, and conversely a pair aligned with a direction in which the ball expands does revive. Therefore
\begin{equation}
    \cN_\text{MSV}(\beta \to 0) = 0
    \iff \cN_\text{BLP} = 0,
    \label{eq:msvinfT}
\end{equation}
which for $F = 0$ means that $|G(t)|$ is non-increasing: in the unital limit the entropic and trace-distance measures collapse onto one another. Note that where $\cN_\text{BLP} > 0$, as in the bound-state regime of Sec.~\ref{sec:results}, $\cN_\text{MSV}$ retains a nonzero value as $\beta \to 0$.

At finite temperature the situation is different. Since $\cN_\text{MSV}$ is a maximum over input pairs, any single pair gives a lower bound, and it suffices to exhibit intervals on which $\partial_t J_\text{pol} > 0$. Take the spin chain of Sec.~\ref{sec:model} at $E = \Omega/2$, $\Gamma = \Omega$, $\mu = 0$, the point of Fig.~\ref{fig:E_0.5_Gamma_1.0}, where $|G(t)|$ decreases monotonically at all times and hence $\cN_\text{BLP} = \cN_\text{LPP} = 0$ identically. Evaluating Eq.~\eqref{eq:Jdot} with the exact $G(t)$ and $Z_\beta(t)$ shows the drift term $\partial_cJ_\text{pol} \partial_t c$ outweighing the shrinking of the gap on several intervals. This gives $\cN_\text{MSV} \geq 9.4 \times 10^{-3}$ at $T = 0.02 \Omega$, a lower bound that decreases monotonically to zero as $T$ grows (Table~\ref{tab:msvfamilies}). Together with the exact vanishing at $\beta \to 0$, Eq.~\eqref{eq:msvinfT}, proved for the chain in Sec.~\ref{sec:master_equation}, this establishes that $\cN_\text{MSV}$ is bath-state dependent.

It is worth being explicit about what the equal-weight trace distance does and does not probe. For a qubit, $D(\hat\rho_1,\hat\rho_2) = |\bm v_1 - \bm v_2|/2$ depends on the states only through the difference of their Bloch vectors, which an affine map transports by its linear part alone; the translation $Z_\beta(t)$, the only place the bath state enters, therefore cannot appear. This is also why $\cN_\text{BLP} = 0$ does not by itself imply P-divisibility: the property equivalent to P-divisibility is monotonicity of the Helstrom matrix norm $\|q\hat\rho_1 - (1 - q)\hat\rho_2\|_1$ for \emph{every} prior $q$, not for $q = 1/2$ alone~\cite{Wissmann2015}. Writing $w_z$ for the longitudinal component of $q\bm v_1 - (1 - q)\bm v_2$, one has $w_z(t) = |G(t)|^2 w_z(0) + (2q - 1)s(t)$ with $s(t) = |G(t)|^2 + 2Z_\beta(t) - 1$, so the bath-dependent term is weighted by $2q - 1$ and is switched off precisely at $q = 1/2$. We have checked that for $q \neq 1/2$ this quantity does revive at the parameters of Fig.~\ref{fig:E_0.5_Gamma_1.0}, where $\cN_\text{BLP} = 0$, consistent with $\cN_\text{RHP} > 0$ there.

The remaining two measures depend on the bath state more directly, since both involve $Z_\beta(t)$ explicitly: the concurrence entering $\cN_\text{RHPE}$ through Eq.~\eqref{eq:concurrence}, and $\cN_\text{RHP}$ through the individual rates $\Lambda^\pm(t)$ of Sec.~\ref{sec:master_equation}, of which only the sum is bath independent. That structure also yields a bath-independent lower bound on $\cN_\text{RHP}$, Eq.~\eqref{eq:rhpfloor}.

\section{A Spin Chain as a Reservoir} \label{sec:model}

We now consider a particular example of the general model of Sec.~\ref{sec:quadratic}, a reservoir with no pairing terms ($V = 0$ throughout), so that the total particle number is conserved, $[\hat H,\hat N] = 0$, the anomalous propagator vanishes ($F(t) = 0$) and the dynamics become analytically tractable: the propagator acquires a scalar spectral representation, and the localized levels responsible for non-Markovian backflow can be located in closed form.

As a concrete example, we take the reservoir to be a semi-infinite spin chain whose terminal site is coupled to the system site, as sketched in Fig.~\ref{fig:setup}. Since we treat the chain as a heat bath, it is characterized by a temperature $T \equiv 1/(k_\text{B}\beta)$ and a chemical potential $\mu$. For simplicity, we set Boltzmann's constant and the reduced Planck constant $k_\text{B} = \hbar = 1$ throughout.
\begin{figure}[t]
    \centering
    \includegraphics[width=\linewidth]{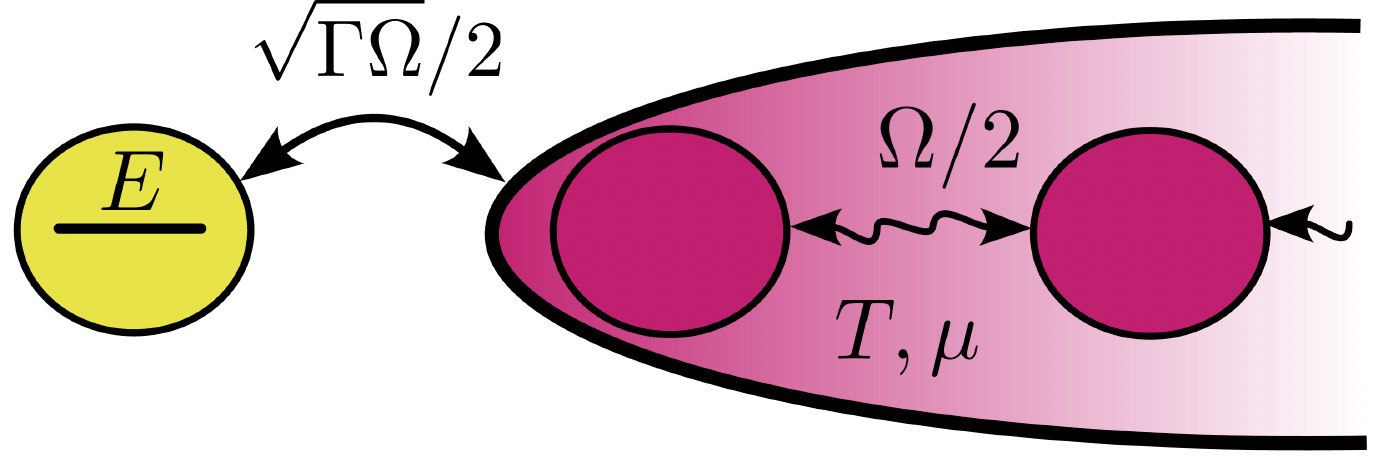}
    \caption{A single fermionic site with on-site energy $E$, coupled to a reservoir modeled as a semi-infinite spin chain. The system--reservoir coupling has strength $\sqrt{\Gamma\Omega}/2$, and the chain has uniform nearest-neighbor hopping $\Omega/2$, giving a band of width $2\Omega$. The reservoir is held at temperature $T$ and chemical potential $\mu$.}
    \label{fig:setup}
\end{figure}

Under the Jordan--Wigner transformation~\cite{Jordan1928, Lieb1961}, an XX spin chain in a transverse field maps onto a chain of spinless free fermions. The transformation can be applied to the bath alone or to the bath together with the system spin at its end; both describe the same spin model and give the same reduced dynamics, but only the latter makes $\hat H$ quadratic, so that Sec.~\ref{sec:quadratic} applies. Ordering the sites from the system, the system becomes the fermionic site $\hat d$ with no string, and since it couples only to the terminal site of the chain, the coupling carries none either. In the fermionic picture, the local magnetic fields become on-site energies, the field applied to the bath spins plays the role of the chemical potential $\mu$, and the exchange coupling between the system and the chain, which transfers excitations, becomes a tunneling term that exchanges particles. The three terms of Eq.~\eqref{eq:H} then read
\begin{align}
    \hat{H}_{S} &= E\hat{d}^{\dagger}\hat{d},
    \label{eq:HS} \\
    \hat{H}_{B} &= -\frac{\Omega}{2}\sum_{k = 1}^{K-1} (\hat{c}^{\dagger}_{k}\hat{c}_{k + 1} + \hat{c}^{\dagger}_{k + 1}\hat{c}_{k}), \\
    \hat{H}_{SB} &= \frac{\sqrt{\Gamma\Omega}}{2}(\hat{d}^{\dagger}\hat{c}_{1} + \hat{c}^{\dagger}_{1}\hat{d}),
    \label{eq:HSB}
\end{align}
where $E$ is the on-site energy of the system. Equations~\eqref{eq:HS}--\eqref{eq:HSB} define a resonant-level, or Fano--Anderson, model~\cite{Anderson1961, Fano1961}: a single discrete level hybridized with a continuum. The choice of a uniform chain is not restrictive, since any bath spectral density can be mapped onto a one-dimensional chain with suitably chosen hoppings~\cite{Chin2010}. The reservoir can be diagonalized via the canonical transformation
\begin{equation}
    \hat{a}_{q} = \sqrt{\frac{2}{K + 1}}\sum_{k = 1}^{K}\sin(qk)\hat{c}_{k},~q = \frac{\pi n}{K + 1},~n = 1,\dots,K,
    \label{eq:aq}
\end{equation}
with wavenumbers $q$, which results in
\begin{align}
    \hat{H}_{B} &= \sum_{q}\omega_{q}\hat{a}^{\dagger}_{q}\hat{a}_{q}, \\
    \hat{H}_{SB} &= \sum_{q}(\tau_{q}\hat{d}^{\dagger}\hat{a}_{q} + \tau^{*}_{q}\hat{a}^{\dagger}_{q}\hat{d}),
\end{align}
where the dispersion relation is $\omega_{q} = -\Omega\cos(q)$, and the tunnel couplings are $\tau_{q} = \sqrt{\Gamma\Omega/(2K + 2)}\sin(q)$.

The reservoir is initialized in the Gibbs state~\eqref{eq:gibbs} where, equivalently, $\hat N_B = \sum_q \hat a^\dagger_q \hat a_q$ in the eigenmode basis. Note that a uniform field on the bath spins commutes with $\hat N_B$, and so can be absorbed into $\mu$. The dynamics preserve the total number of fermions since $[\hat{H},\hat N] = 0$, where $\hat N = \hat N_{S} + \hat N_B$ with $\hat N_S$ being the particle number operator of the system. Consequently, the populations and the coherences of $\hat{\rho}_{S}$ evolve independently, and the reduced dynamics is fixed by the population
\begin{equation}
    C(t) = \Tr{\hat{\rho}_{S}(t)\hat{d}^{\dagger}\hat{d}},
    \label{eq:Cdef}
\end{equation}
which satisfies $0 \leq C(t) \leq 1$, together with the coherence $\ev*{\hat{d}(t)}$. Note that operators without a time argument are taken at $t = 0$.

We can compute Eq.~\eqref{eq:Cdef} by solving the equation of motion for $\hat{d}(t)$. The solution for $t > 0$ can be completely described by the propagator
\begin{equation}
    G(t) = \bra{0}\hat{d}(t)\hat{d}^{\dagger}\ket{0},
    \label{eq:Gdef}
\end{equation}
where $\ket{0}$ is the vacuum state. This coincides with the definition below Eq.~\eqref{eq:heisenbergC} for $F = 0$: $\bra{0}\hat d(t)\hat d^\dagger\ket{0} = \bra{0}\{\hat d(t),\hat d^\dagger\}\ket{0} - \bra{0}\hat d^\dagger\hat d(t)\ket{0} = G(t)$, where the second term vanishes because, for $[\hat H,\hat N] = 0$, $\hat d(t)$ is a linear combination of annihilation operators and so annihilates the vacuum. Let $\varphi(\omega)$ denote the probability distribution of excitation energies associated with the state $\hat{d}^{\dagger}\ket{0}$~\cite{Mitchison2018},
\begin{equation}
    \varphi(\omega) = \sum_{n}|\bra{n}\hat{d}^{\dagger}\ket{0}|^{2} \delta(\omega - \varepsilon_{n}),
    \label{eq:varphi-def}
\end{equation}
where $\ket{n}$ is a single-particle eigenstate of $\hat{H}$ with energy $\varepsilon_{n}$. Since $\hat{H}\ket{0} = 0$ and $\hat{d}^{\dagger}\ket{0}$ lies in the one-particle sector, inserting the completeness relation $\sum_{n}\ketbra{n}{n}$ into Eq.~\eqref{eq:Gdef} gives
\begin{equation}
    G(t) = \int e^{-\ii \omega t}\varphi(\omega) \dd{\omega},
    \label{eq:G-fourier}
\end{equation}
so that the normalization of $\varphi$ is equivalent to $G(0) = 1$,
\begin{equation}
    \int \varphi(\omega) \dd{\omega} = 1,~\lim_{\Gamma\to0}\varphi(\omega) = \delta(\omega - E).
    \label{eq:varphi-norm}
\end{equation}

The effect of the reservoir on the system is determined by the spectral density
\begin{equation}
    \cJ(\omega) = \sum_{q}|\tau_{q}|^{2}\delta(\omega - \omega_{q}).
    \label{eq:Jdef}
\end{equation}
Following Ref.~\cite{Mitchison2018}, in the limit $K \to \infty$ the level spacing $\Delta q = \pi/(K + 1)$ tends to zero, $q$ becomes a continuous variable on $[0,\pi]$, and $\sum_{q} \to (K + 1)/\pi \int_{0}^{\pi}\dd{q}$, so that Eq.~\eqref{eq:Jdef} reduces to the Newns spectral density~\cite{Newns1969}
\begin{equation}
    \cJ_\text{N}(\omega) = \frac{\Gamma}{2\pi}\sqrt{1 - \frac{\omega^{2}}{\Omega^{2}}} \Theta(\Omega - |\omega|).
    \label{eq:newns}
\end{equation}
The associated impurity self-energy, obtained as $|\sqrt{\Gamma\Omega}/2|^{2}$ times the surface Green function of a semi-infinite chain of hopping $\Omega/2$~\cite{Economou2006}, reads
\begin{equation}
    \Sigma(\omega) = \frac{\Gamma}{2\Omega} \times
    \begin{cases}
        \omega - \ii\sqrt{\Omega^{2}-\omega^{2}}, & |\omega| \leq \Omega,\\
        \omega - \operatorname{sgn}(\omega)\sqrt{\omega^{2}-\Omega^{2}}, & |\omega| > \Omega,
    \end{cases}
    \label{eq:selfenergy}
\end{equation}
i.e.\ the retarded branch $\omega \to \omega + \ii0^{+}$, fixed by $\Sigma(\omega) \to \Gamma\Omega/(4\omega)$ as $|\omega| \to \infty$. Its real and imaginary parts inside the band are $\Re\Sigma = \Gamma\omega/(2\Omega)$ and $\Im\Sigma = -\pi\cJ_\text{N}(\omega)$, and the excitation-energy distribution~\eqref{eq:varphi-def} evaluates to
\begin{equation}
    \varphi(\omega) = \frac{\cJ_\text{N}(\omega)}{[\omega - E - \Re\Sigma(\omega)]^{2} + [\pi\cJ_\text{N}(\omega)]^{2}}.
    \label{eq:varphi}
\end{equation}
Equation~\eqref{eq:varphi-def} is a sum over all single-particle eigenstates of $\hat{H}$. When $|E| > \Omega - \Gamma/2$, one such eigenstate is a localized level lying outside the band, at the root $\omega_{b}$ of $\omega = E + \Sigma(\omega)$ with $|\omega_{b}| > \Omega$. It contributes an isolated delta function to Eq.~\eqref{eq:varphi-def}, and hence a non-decaying term to the propagator,
\begin{align}
    G(t) &= \sum_{b}R_be^{-\ii \omega_{b}t} + \int_{-\Omega}^{\Omega} e^{-\ii \omega t}\varphi(\omega) \dd{\omega}, \label{eq:G-bound} \\
    R_b &= |\bra{\omega_b}\hat d^\dagger \ket{0}|^2 = [1 - \Sigma'(\omega_{b})]^{-1},
\end{align}
with $\int_{-\Omega}^{\Omega} \varphi \dd{\omega} + \sum_{b} R_b = 1$, which is Eq.~\eqref{eq:varphi-norm}. Outside the band $\Sigma'(\omega) = \Gamma\left(1 - |\omega|/\sqrt{\omega^2 - \Omega^2}\right)/(2\Omega)$, and since $|\omega| > \sqrt{\omega^2 - \Omega^2} > 0$ there, $\Sigma'(\omega_b) < 0$. Since $\Sigma(\pm\Omega) = \pm\Gamma/2$ and $\omega - \Sigma(\omega)$ is strictly increasing outside the band, a root $|\omega_b| > \Omega$ exists if and only if $|E| > \Omega - \Gamma/2$, and it is unique on each side. Hence $R_b\in(0,1)$ automatically, as a residue must be if it is to be the weight $|\bra{\omega_b}\hat d^\dagger\ket{0}|^2$: it tends to $1$ as $\Gamma \to 0$, where the level decouples, and to $0$ at the threshold $|E| = \Omega - \Gamma/2$, where it merges into the band.

Localized levels outside a band are a standard feature of models with bounded spectra~\cite{Zhang2012}, and are responsible for the non-decaying term in Eq.~\eqref{eq:G-bound}. For $\Gamma < 2\Omega$ at most one such level exists. The threshold $|E| = E_c \equiv \Omega - \Gamma/2$ reproduces a known spin-chain result: under $\Omega = 2J$, $\Gamma = 2J(J_0/J)^2$ and $E = 2h$, it becomes $h/J = 1 - (J_0/J)^2/2$, the parabola separating the regimes of Ref.~\cite{Apollaro2011}.

Completing the square in the denominator of Eq.~\eqref{eq:varphi} recovers the Lorentzian form of Ref.~\cite{Mitchison2018},
\begin{equation}
    \varphi(\omega) = \frac{1}{1 - \Gamma/\Omega} \frac{\cJ_\text{N}(\omega)}{(\omega - E')^{2} + \Gamma'^{2}/4},
    \label{eq:varphi-lorentzian}
\end{equation}
with the shifted energy and decay rate
\begin{equation}
    E' = \frac{1 - \Gamma/2\Omega}{1 - \Gamma/\Omega}E,~\Gamma' = \frac{\sqrt{1 - \Gamma/\Omega - E^{2}/\Omega^{2}}}{1 - \Gamma/\Omega}\Gamma.
\end{equation}
Here $\Gamma'$ is real only for $\sqrt{E^{2} + \Gamma\Omega} < \Omega$. In the window $\sqrt{\Omega^{2} - \Gamma\Omega} < |E| < \Omega - \Gamma/2$ the denominator of Eq.~\eqref{eq:varphi-lorentzian} acquires real roots, but these lie outside the band and correspond to virtual rather than bound states. Equation~\eqref{eq:varphi} holds for all parameters; in particular it remains regular at $\Gamma = \Omega$, where Eq.~\eqref{eq:varphi-lorentzian} is singular.

We now define the noise kernel $\Phi(t)$ as
\begin{equation}
    \Phi(t) = \int e^{-\ii\omega t} \cJ_\text{N}(\omega)f(\omega) \dd{\omega},
\end{equation}
where $f(\omega) = [e^{\beta(\omega - \mu)} + 1]^{-1}$ is the Fermi--Dirac distribution of the reservoir. The noise kernel receives no contribution from the localized levels, since $\cJ_\text{N}$ has support only within the band. Solving the Heisenberg equations of motion (see Appendix~\ref{app:dynamical_map} and Ref.~\cite{Mitchison2018}), the solution of Eq.~\eqref{eq:Cdef} can then be represented as
\begin{align}
    C(t) &= |G(t)|^{2}C(0) + Z_\beta(t), \\
    Z_\beta(t) &= \int_{0}^t \int_{0}^t G^*(t')\Phi(t' - t'')G(t'') \dd{t''} \dd{t'}. \label{eq:Zdouble}
\end{align}

\subsection{The Master Equation} \label{sec:master_equation}

To model the dynamics of our system, we use a master equation, originally derived for a double quantum dot in contact with two reservoirs~\cite{Mitchison2018}. In our case, the master equation governing the dynamics of the open system has the following form
\begin{align} \label{eq:master_equation}
    \partial_t \hat{\rho}_S(t) = -\ii[\tilde H_S(t), \hat{\rho}_S(t)] &+ \Lambda^+(t) \cD[\hat{d}^\dagger]\hat{\rho}_S(t) \\ \nonumber
    &+ \Lambda^-(t) \cD[\hat{d}]\hat{\rho}_S(t),
\end{align}
where the dissipator is defined as
\begin{equation}
    \cD[\hat A]\hat \rho = \hat A\hat \rho \hat A^\dagger - \frac{1}{2}\{\hat A^\dagger \hat A, \hat \rho\},
\end{equation}
and $\tilde H_S(t) = H(t)\hat d^\dagger\hat d$ is the renormalized system Hamiltonian, distinct from the bare $\hat H_S$ of Eq.~\eqref{eq:HS}; note that $C(t) = \rho_{11}(t)$. $H(t)$ is a real function, and we also define $\gamma(t) \equiv \Lambda^+(t) + \Lambda^-(t)$. Assuming $G(t) \neq 0$, the propagator $G(t)$ can be related to the real functions $\gamma(t)$ and $H(t)$~\cite{Mitchison2018} by
\begin{equation}
    \ii H(t) + \frac{1}{2}\gamma(t) = -\frac{\partial_t G(t)}{G(t)}.
    \label{eq:gammaH}
\end{equation}
The gain and decay rates $\Lambda^\pm(t)$ are given by
\begin{align}
    \Lambda^+(t) &= \partial_t Z_\beta(t) + \gamma(t) Z_\beta(t), \label{eq:lambdaplus} \\
    \Lambda^{-}(t) &= \gamma(t) - \Lambda^{+}(t) \label{eq:lambdaminus}.
\end{align}
Equations~\eqref{eq:lambdaplus} and~\eqref{eq:lambdaminus} follow from the map of Appendix~\ref{app:dynamical_map}: differentiating Eq.~\eqref{eq:rho11} and using $\gamma(t) = -\partial_t\ln|G(t)|^2$ gives $\partial_t\rho_{11} = -\gamma(\rho_{11} - Z_\beta) + \partial_t Z_\beta = \Lambda^+\rho_{00} - \Lambda^-\rho_{11}$, which is the population sector of Eq.~\eqref{eq:master_equation}. The master equation is therefore exact rather than a Born--Markov approximation, and the rates $\Lambda^\pm(t)$ may take either sign: $\cM_t$ remains completely positive at all times, but it is CP-divisible ($\cM_{t + \varepsilon}\cM_t^{-1}$ completely positive for all $t,\varepsilon > 0$) if and only if $\Lambda^\pm(t) \ge 0$ throughout~\cite{Rivas2014, Hall2014}.

One has $H(0) = E$ and $\gamma(0) = 0$, but $H(t)$ is not constant: a short-time expansion of Eq.~\eqref{eq:G-fourier} in the cumulants of $\varphi$ gives $H(t) = E + E\Gamma\Omega t^{2}/8 + O(t^{4})$, and at long times $H(t)$ approaches the bound-state energy $\omega_b$ when a localized level exists. Only in the wide-band limit $\Omega \to \infty$ at fixed $\Gamma$ does $H(t)$ reduce to $E$ for all $t$. None of the five measures considered below depends on $H(t)$: they involve the map only through $|G(t)|$ and $Z_\beta(t)$, and in $\cN_\text{RHP}$ the phase cancels from $\det\chi^{(c)}_\varepsilon$ at first order in $\varepsilon$ (see Appendix~\ref{app:RHP}). Moreover, the assumption that $G(t) \neq 0$ is exactly the condition for $\cM_t$ to be invertible, $\det\cM_t = |G(t)|^4$.

The relation between the measures is visible in the rates themselves. By Eq.~\eqref{eq:gammaH}, $\gamma(t) = -\partial_t\ln|G(t)|^2$, so $|G(t)|$ increases precisely when $\gamma(t) = \Lambda^+(t) + \Lambda^-(t) < 0$. Both $\cN_\text{BLP}$ and $\cN_\text{LPP}$ therefore register non-Markovianity exactly on the intervals where the \emph{sum} of the rates is negative, whereas $\cN_\text{RHP}$ responds whenever either rate alone is negative (see Appendix~\ref{app:RHP}). Consequently, $\cN_\text{BLP} > 0 \implies \cN_\text{RHP} > 0$, but not conversely: the divisibility-based measure captures different aspects of non-Markovianity, as also shown in Ref.~\cite{Li2018}. It also explains the bath-state independence established in Sec.~\ref{sec:independence} without appeal to Eq.~\eqref{eq:affine}: by Eqs.~\eqref{eq:lambdaplus}--\eqref{eq:lambdaminus} the sum $\Lambda^+ + \Lambda^-$ is $\gamma(t)$, in which $Z_\beta$, and hence $f(\omega)$, has canceled, while the individual rates retain it.

The same decomposition yields a bath-independent floor for the divisibility measure. Since $\max\{0,-a\} + \max\{0,-b\} \geq \max\{0,-(a + b)\}$ for any real $a$ and $b$, Eq.~\eqref{eq:lambdaminus} applied to Eq.~\eqref{eq:RHPresult} gives
\begin{equation}
    \cN_\text{RHP} \geq \cN^{(0)}_\text{RHP} \equiv \int_0^\infty \max\{0,-\gamma(t)\} \dd{t}
    \label{eq:rhpfloor}
\end{equation}
for every state of the reservoir. By Eq.~\eqref{eq:gammaH}, $\cN^{(0)}_\text{RHP} = \int_0^\infty\max\{0,\partial_t\ln|G|^2\} \dd{t}$ is fixed by the propagator alone. The bound is saturated whenever $f(\omega)$ is constant over the band, $f(\omega) = \bar f$ for $|\omega| \leq \Omega$: Eqs.~\eqref{eq:noise_app} and~\eqref{eq:Zbound} then give $Z_\beta(t) = \bar f [1 - |G(t)|^2]$, so that $\Lambda^+ = \bar f \gamma$ and $\Lambda^- = (1 - \bar f) \gamma$ change sign together with $\gamma$. This covers the infinite-temperature limit, $\bar f = 1/2$, and holds asymptotically when $\mu$ lies below ($\bar f \to 0$) or above ($\bar f \to 1$) the band by many times $T$. For an emitter at the edge of an XY chain, Ref.~\cite{Roos2020} similarly found that divisibility violations of the vacuum dynamics survive at all temperatures. Two consequences follow. First, the bath-state dependence of $\cN_\text{RHP}$ resides entirely in the excess $\cN_\text{RHP} - \cN^{(0)}_\text{RHP}$, which requires $f(\omega)$ to vary within the band; at finite temperature this variation is the Fermi edge, the region of width $\sim T$ around $\omega = \mu$ where $f$ drops from $1$ to $0$. Second, since $\cN^{(0)}_\text{RHP}$ vanishes exactly when $|G(t)|$ is non-increasing, the divisibility measure obeys the analogue of Eq.~\eqref{eq:msvinfT},
\begin{align}
    \cN_\text{RHP}(\beta \to 0) &= \cN^{(0)}_\text{RHP}, \label{eq:rhpinfT} \\
    \cN_\text{RHP}(\beta \to 0) &= 0 \iff \cN_\text{BLP} = 0. \label{eq:rhpinfTzero}
\end{align}

The same identity gives Eq.~\eqref{eq:msvinfT} directly for the chain, where $K \to \infty$ and no normalizable maximally mixed reservoir exists. At $\bar f = 1/2$ the map has no shift, and its intermediate maps $\cM_t\cM_{t'}^{-1}$, $t \geq t'$, simply contract the Bloch ball, up to a rotation about $z$, by $\lambda = |G(t)|/|G(t')|$ in the $xy$ plane and by $\lambda^2$ along $z$, which is completely positive if and only if $\lambda \leq 1$~\cite{Fujiwara1999}. If $|G(t)|$ is non-increasing the dynamics is therefore CP-divisible, and since the QJSD cannot increase under completely positive trace-preserving maps~\cite{Lindblad1975}, no pair revives and $\cN_\text{MSV}(\beta \to 0) = 0$; if $|G|$ rises, the polar pair revives with it.

\section{Numerical Results} \label{sec:results}

We now evaluate the five measures for the spin-chain reservoir at three parameter points, one in each regime of the propagator set by the bound-state threshold $|E| = \Omega - \Gamma/2$ of Sec.~\ref{sec:model}: above it, where a localized level keeps $|G(t)|$ finite (Fig.~\ref{fig:E_2.0_Gamma_0.5}); on it, where $|G(t)|$ decays monotonically (Fig.~\ref{fig:E_0.5_Gamma_1.0}); and below it, where $|G(t)|$ decays to zero through the band (Fig.~\ref{fig:E_0.5_Gamma_0.5}). Figure~\ref{fig:blind} locates the three points in the $(E, \Gamma)$ plane, and Fig.~\ref{fig:mu} varies the chemical potential at fixed temperature.

\paragraph*{Bath-state independence.} Figure~\ref{fig:E_2.0_Gamma_0.5} shows the dynamics and the five measures at $E = 2\Omega$, $\Gamma = \Omega/2$, above the threshold, where a localized level exists. The propagator stays close to the residue $R_b = 0.965$, with $\min_t|G(t)| = 0.941$, so by Appendix~\ref{app:BLP} the polar pair is optimal in every rising window and Eq.~\eqref{eq:blpbound} holds with equality. Across all $26$ temperatures computed, $T = 0$ and $25$ values from $0.01 \Omega$ to $10 \Omega$, $\cN_\text{BLP} = 0.0918$ and $\cN_\text{LPP} = 0.1692$ are identical to machine precision, as Theorem~\ref{thm:independence} requires, even though $Z_\beta(t)$, and with it the rates $\Lambda^\pm(t)$, change with temperature. The time-local master equation~\eqref{eq:master_equation}, which carries the temperature explicitly through $\Lambda^\pm(t)$, returns the same values (open circles). Figure~\ref{fig:E_2.0_Gamma_0.5}(b) starts at $T = 0.01 \Omega$, since at $T = 0$ the three bath-sensitive measures grow logarithmically with $t_\text{max}$ (see Appendix~\ref{app:NM}); at every $T > 0$ the residual truncation at $\Omega t_\text{max} = 320$ is $\approx6\%$ for every measure and temperature alike, and so drops out of the temperature dependence.

\begin{figure*}[t]
    \centering
    \includegraphics[width = \linewidth]{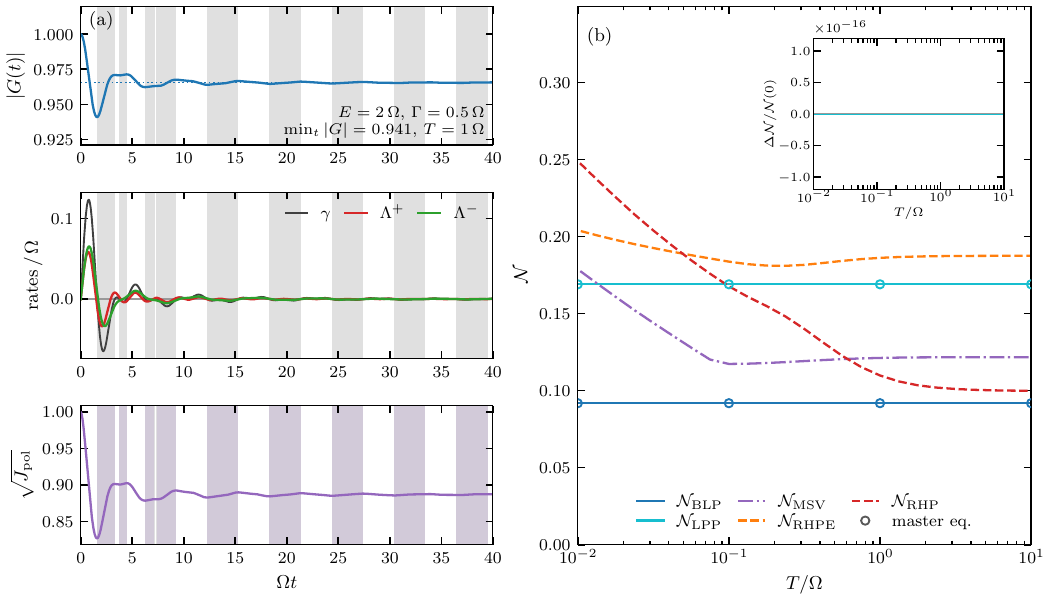}
    \caption{
    (a) Propagator and canonical rates for $E = 2\Omega$, $\Gamma = \Omega/2$ at $T = \Omega$, shown to $\Omega t = 40$; the measures in (b) are integrated over the full window $\Omega t_\text{max} = 320$ ($n_t = 32001$). Shading marks the intervals with $\gamma(t) < 0$, on which $|G(t)|$ rises and backflow accumulates, and the dotted line is the bound-state residue $R_b$; here $\min_t|G(t)| = 0.941$. (b) The five measures of Sec.~\ref{sec:NM} against reservoir temperature at the same parameters. $\cN_\text{BLP}$ and $\cN_\text{LPP}$ are independent of $T$, as Theorem~\ref{thm:independence} requires, while the other three vary by $11$--$60\%$. Inset: deviation of $\cN_\text{BLP}$ and $\cN_\text{LPP}$ from their lowest-temperature values; note the $10^{-16}$ scale, i.e.\ they are constant to machine precision. Open circles are an independent evaluation through the time-local master equation~\eqref{eq:master_equation}, which carries the $T$-dependent rates $\Lambda^\pm(t)$ explicitly, agreeing with the direct evaluation to $4.3 \times 10^{-6}$ (BLP) and $1.1 \times 10^{-4}$ (LPP).
}
 \label{fig:E_2.0_Gamma_0.5}
\end{figure*}

\paragraph*{Entropic measure.} $\cN_\text{MSV}$, evaluated as the best of the families of Appendix~\ref{app:MSV}, Eq.~\eqref{eq:msvbound}, and hence a lower bound within a few percent of an unrestricted optimization (Table~\ref{tab:msvfamilies}), varies by $34\%$ over the same range. It falls steeply at low temperature, from $0.179$ at $T = 0.01 \Omega$ to a minimum of $0.117$ at $T \simeq 0.1 \Omega$, then rises slowly to a plateau, $0.122$ at $T = 10\Omega$. The plateau is nonzero because $\cN_\text{BLP} > 0$ here: by Eq.~\eqref{eq:msvinfT}, $\cN_\text{MSV}$ survives the unital limit whenever the trace distance itself revives. The minimum is not a smooth feature but a change of the optimal pair of input states. Resolving $\cN_\text{MSV}$ into the families of Appendix~\ref{app:MSV}, the best pair at low temperature is a diagonal pair of mixed states close to the ground state, $(c_0,\Delta_0) \simeq (0.11,0.22)$ at $T = 0.02\Omega$ in the notation of Eq.~\eqref{eq:Jpop}, whose advantage over the polar pair derives from the nearly empty bath and shrinks as $Z_\beta$ grows. At a crossover temperature $0.075 \Omega < T^* < 0.1 \Omega$ that advantage vanishes and the polar pair takes over, after which $\cN_\text{MSV}$ follows the nearly temperature-independent antipodal branch. The crossover is the finite-temperature counterpart of the family-(a) enhancement in Table~\ref{tab:msvfamilies}, and it is itself a bath-state effect: the pair that best measures backflow moves with temperature, whereas the optimum of $\cN_\text{BLP}$ does not. Its location does not depend on $t_\text{max}$, and the depth of the minimum below the plateau, $3.6\%$, changes by less than $0.1$ percentage points between $\Omega t_\text{max} = 160$ and $320$.

\paragraph*{Entanglement and divisibility measures.} $\cN_\text{RHP}$ decreases monotonically with temperature, from $0.249$ to $0.0999$, the largest variation of the five measures ($60\%$). Its high-temperature value is the bath-independent floor of Eq.~\eqref{eq:rhpfloor}: evaluated on the same window, $\cN^{(0)}_\text{RHP} = 0.0998$, which $\cN_\text{RHP}$ approaches from above, as Eq.~\eqref{eq:rhpinfT} requires. The excess over the floor accumulates on intervals where one canonical rate is negative while the other is positive; as $T \to \infty$ both tend to $\gamma/2$, they change sign together, and the excess disappears. $\cN_\text{RHPE}$ varies least, by $11\%$, and not monotonically: it falls from $0.204$ to a shallow minimum of $0.181$ at $T \simeq 0.24 \Omega$ before recovering to $0.188$. The concurrence~\eqref{eq:concurrence} subtracts $\sqrt{Z_\beta(1 - |G|^2 - Z_\beta)}$, which is largest at $Z_\beta = (1 - |G|^2)/2$, i.e.\ in the unital limit, so heating degrades the entanglement itself; $\cN_\text{RHPE}$, however, counts the revivals of the concurrence rather than its size, and the two need not move together. In particular, $\cN_\text{RHPE}$ and $\cN_\text{RHP}$ move in opposite directions only above $T \simeq 0.24 \Omega$; below it both decrease.

\paragraph*{On the threshold.} Figure~\ref{fig:E_0.5_Gamma_1.0} shows $E = \Omega/2$, $\Gamma = \Omega$, which lies on the bound-state threshold $|E| = \Omega - \Gamma/2$, at the end of the segment on which $\cN_\text{BLP}$ vanishes (Fig.~\ref{fig:blind}); here the measures are converged within $\Omega t_\text{max} = 40$ at every $T > 0$. This is the Markovianity point $h/J = 1/2$, $J_0/J = 1$ of Ref.~\cite{Apollaro2011} and, in the notation of Ref.~\cite{Roos2020}, the detuning $\Delta_h = J$ at coupling $J$, which they identified as BLP-Markovian but not divisible for a partially filled ground state of the chain. Here $\gamma(t) \geq 0$ at all times: $|G(t)|$ decays monotonically, as $t^{-1/2}$, and $\cN_\text{BLP} = \cN_\text{LPP} = 0$ identically at every temperature. Nevertheless, $\cN_\text{RHP} > 0$: $\Lambda^+(t)$ first turns negative near $\Omega t \simeq 3.1$ while $\Lambda^+ + \Lambda^- = \gamma$ stays positive, giving $\cN_\text{RHP} = 0.093$ at $T = 0.02 \Omega$ although $\cN_\text{BLP}$ vanishes. The equal-weight trace distance cannot register a violation that lives in the bath-dependent translation (Sec.~\ref{sec:qjsd}). The entropic measure does: $\sqrt{J_\text{pol}}$ revives on the shaded intervals of Fig.~\ref{fig:E_0.5_Gamma_1.0}(a), and $\cN_\text{MSV} = 9.4 \times 10^{-3}$ at $T = 0.02 \Omega$. Both measures vanish in the unital limit, as Eqs.~\eqref{eq:msvinfT} and~\eqref{eq:rhpinfTzero} require when $\cN_\text{BLP} = 0$: $\cN_\text{MSV}$ falls to $1.8 \times 10^{-6}$ at $T = 10 \Omega$, and $\cN_\text{RHP}$ to $3.7 \times 10^{-4}$. $\cN_\text{RHPE}$ behaves differently again, vanishing identically for $T \gtrsim 0.5 \Omega$. The concurrence revives once, on $\Omega t \simeq 3.3$--$4.4$, and then suffers sudden death; heating brings the death forward, from $\Omega t = 5.5$ at $T = 0$ to $3.3$ at $T = 0.58 \Omega$, until it precedes the revival and removes it. Since $|G(t)|$ is not rising during that window, the revival is driven entirely by $Z_\beta(t)$, another manifestation of the non-unital shift that the trace distance cannot see.

\begin{figure*}[t]
    \centering
    \includegraphics[width = \linewidth]{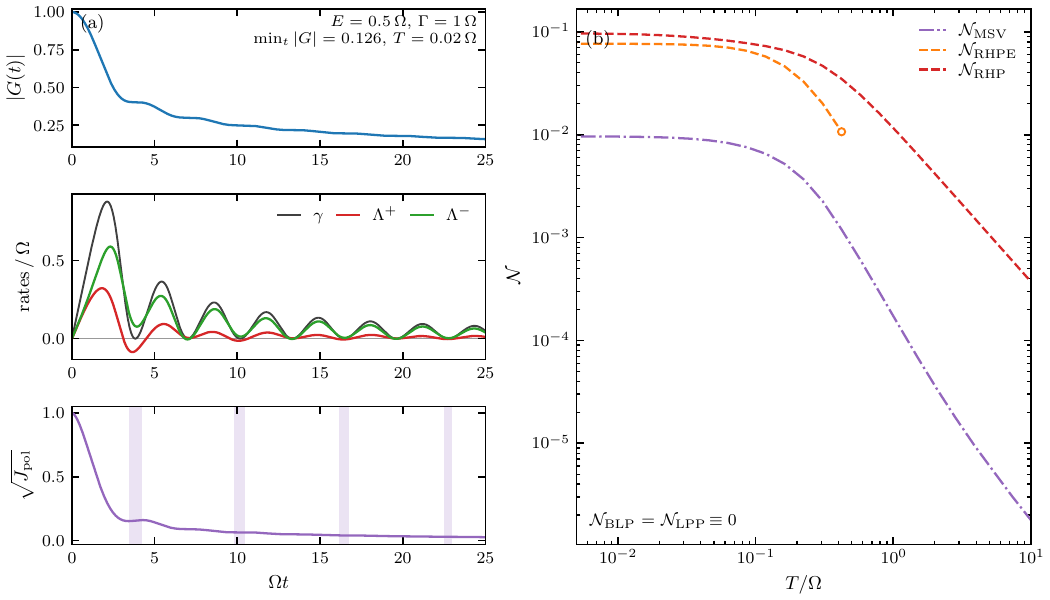}
    \caption{
    As Fig.~\ref{fig:E_2.0_Gamma_0.5}, for $E = \Omega/2$, $\Gamma = \Omega$ with panel (a) at $T = 0.02 \Omega$. This point lies on the bound-state threshold $|E| = \Omega - \Gamma/2$, where $|G(t)| \sim t^{-1/2}$ and $\min_t|G(t)| = 0.126$. Panel (a) is shown to $\Omega t = 25$; the measures in (b) are integrated to $\Omega t_\text{max} = 40$, at which they are converged here for every $T > 0$ (see Appendix~\ref{app:NM}). (a) No interval has $\gamma(t) < 0$, so $|G|$ decreases monotonically, no shading appears in the upper two panels, and $\cN_\text{BLP} = \cN_\text{LPP} = 0$ identically. The individual rate $\Lambda^{+}(t)$ nevertheless first turns negative near $\Omega t \simeq 3.1$, and $\sqrt{J_\text{pol}}$ revives on the shaded intervals: the \emph{sum} of the canonical rates stays positive while one of them does not. (b) Measures against temperature, both axes logarithmic; $\cN_\text{BLP}$ and $\cN_\text{LPP}$ vanish identically and are omitted from the legend. $\cN_\text{MSV}$ is positive at every finite temperature, decreasing monotonically and vanishing only in the unital limit (see Eq.~\eqref{eq:msvinfT}). $\cN_\text{RHPE}$ instead reaches zero at $T \simeq 0.5 \Omega$ and stays there (the open circle marks the last nonzero point), because by $T \simeq 0.58 \Omega$ the concurrence suffers sudden death at $\Omega t \simeq 3.3$, right at the start of its revival window $\Omega t \simeq 3.3\text{--}4.4$, which is then cut off entirely.
}
 \label{fig:E_0.5_Gamma_1.0}
\end{figure*}

\paragraph*{Below the threshold.} Figure~\ref{fig:E_0.5_Gamma_0.5} shows $E = \Gamma = \Omega/2$: the coupling of Fig.~\ref{fig:E_2.0_Gamma_0.5}, but below the threshold, which lies at $E_c = 0.75 \Omega$ for this $\Gamma$, and inside the regime $\sqrt{E^2 + \Gamma\Omega} < \Omega$ of Ref.~\cite{Mitchison2018}. No localized level exists, and $|G(t)|$ decays through the band, falling below $10^{-2}$ by $\Omega t \simeq 20$. From then on $G(t)$ is dominated by the two band edges, where $\varphi(\omega)$ vanishes as a square root (Appendix~\ref{app:longtime}): each edge contributes a term decaying as $t^{-3/2}$ and oscillating at its own frequency, $\pm\Omega$, and the beating between the two makes $\gamma(t)$ turn negative once in every period $\pi/\Omega$ for all $\Omega t \gtrsim 20$ (shaded in Fig.~\ref{fig:E_0.5_Gamma_0.5}(a)), so that $|G(t)|$ revives, weakly but indefinitely. The measures respond to these revivals in two distinct ways. $\cN_\text{BLP}$, $\cN_\text{LPP}$ and $\cN_\text{MSV}$ are built from the distinguishability of states whose Bloch vectors have shrunk with $|G|$, so the late revivals add to them only slowly. $\cN_\text{BLP} = 4.9 \times 10^{-3}$ and $\cN_\text{LPP} = 1.8 \times 10^{-8}$ are once more identical at every temperature, and $\cN_\text{MSV}$ falls from $4.9 \times 10^{-3}$ to $4.2 \times 10^{-3}$; at $\Omega t_\text{max} = 1280$, $\cN_\text{BLP}$ reaches $1.26 \times 10^{-2}$ and converges as $t_\text{max}^{-1/2}$. The canonical rates carry no such weight. Once $|G|$ is small, $Z_\beta(t)$ stops changing, so both rates follow $\gamma(t)$ and change sign with it, as for a flat $f(\omega)$. The dynamics therefore never becomes CP-divisible; this non-Markovianity is recurrent rather than eternal in the sense of Ref.~\cite{Hall2014}, since the rates are negative only on intervals. $\cN_\text{RHP}$ accordingly accumulates along the floor of Eq.~\eqref{eq:rhpfloor}, which grows without bound with the window: $\cN^{(0)}_\text{RHP} = 1.06$, $2.16$ and $5.27$ at $\Omega t_\text{max} = 40$, $80$ and $160$. Below the threshold $\cN_\text{RHP}$ therefore has no limit as $t_\text{max} \to \infty$, and its dependence on the bath state is confined to a finite excess over the floor, $4.4 \times 10^{-3}$ at $T = 0.02 \Omega$, from a brief negative excursion of $\Lambda^+$ near $\Omega t \simeq 4$ that disappears at high temperature. $\cN_\text{RHPE}$ vanishes identically: the concurrence suffers sudden death by $\Omega t \simeq 7.7$ at $T = 0.02 \Omega$, earlier when hotter, and the revivals of $|G|$ are too weak to restore it. With $\cN_\text{RHP}$ nearly flat in temperature, $\cN_\text{RHPE}$ vanishing and $\cN_\text{MSV}$ varying by less than half as much as above the threshold, this regime complements Fig.~\ref{fig:E_0.5_Gamma_1.0}: a strong dependence on the bath state requires $|G(t)|$ to remain appreciable while $Z_\beta(t)$ evolves.

\begin{figure*}[t]
    \centering
    \includegraphics[width = \linewidth]{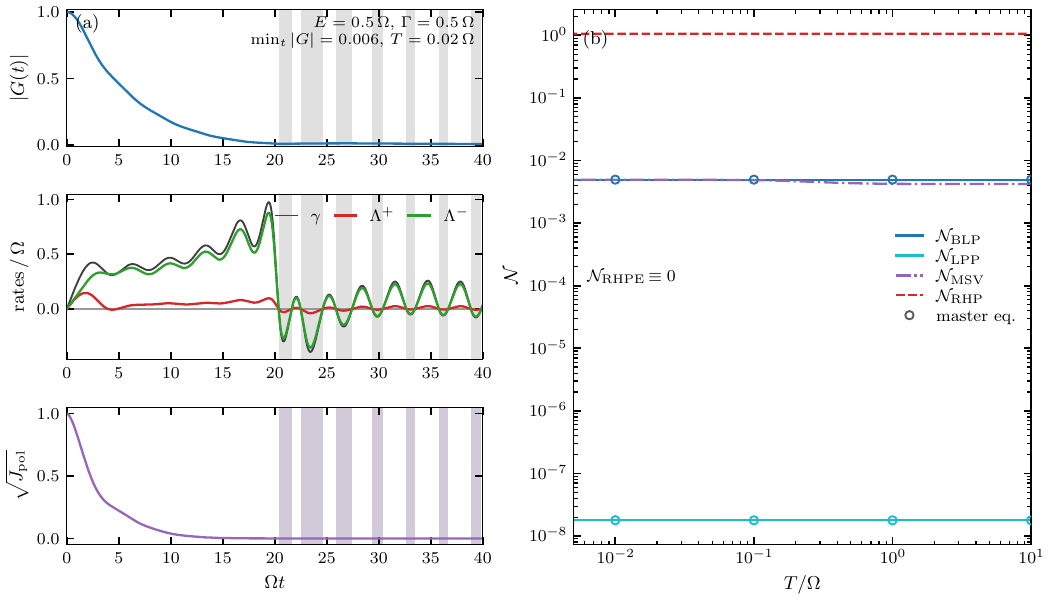}
    \caption{
    As Fig.~\ref{fig:E_2.0_Gamma_0.5}, for $E = \Gamma = \Omega/2$, below the bound-state threshold ($E_c = 0.75 \Omega$ for this $\Gamma$), with panel (a) at $T = 0.02 \Omega$ and the measures in (b) integrated to $\Omega t_\text{max} = 40$. (a) $|G(t)|$ decays through the band, with $\min_t|G(t)| = 0.006$. From $\Omega t \simeq 20$ the two band edges beat at frequency $2\Omega$: $\gamma(t)$ and both rates $\Lambda^\pm(t)$ turn negative once per period (gray shading), and $\sqrt{J_\text{pol}}$ revives weakly on the shaded intervals of the lowest panel. (b) Measures against temperature, both axes logarithmic. $\cN_\text{BLP}$ and $\cN_\text{LPP}$ are independent of $T$, and the master equation (open circles) reproduces them to $1.0 \times 10^{-6}$ and $7 \times 10^{-11}$, respectively; $\cN_\text{MSV}$ varies by $15\%$ and $\cN_\text{RHP}$ by $0.5\%$, and $\cN_\text{RHPE}$ vanishes identically. $\cN_\text{RHP}$ lies on the bath-independent floor of Eq.~\eqref{eq:rhpfloor}, which grows without bound with $t_\text{max}$ in this regime, so its value is specific to the window.
    }
    \label{fig:E_0.5_Gamma_0.5}
\end{figure*}

\paragraph*{Where the trace distance is blind.} Since $\cN_\text{BLP}$ and $\cN_\text{LPP}$ depend on $|G(t)|$ alone, whether they vanish is a property of $(E, \Gamma)$ and holds for every bath state at once. Figure~\ref{fig:blind} maps $\cN_\text{BLP}$ over the $(E, \Gamma)$ plane on a window $\Omega t_\text{max} = 1280$, with the parameter points of Figs.~\ref{fig:E_2.0_Gamma_0.5}--\ref{fig:E_0.5_Gamma_0.5} marked. It vanishes identically only on the bound-state threshold, and only for $\Gamma \leq \Omega$; the segment ends at $\Gamma = \Omega$, which is the point of Fig.~\ref{fig:E_0.5_Gamma_1.0}. This segment coincides with the BLP-Markovian points found for the same model in Refs.~\cite{Apollaro2011, Roos2020}: with hopping $J = \Omega/2$ and qubit coupling $J_0 = \Omega_\text{R} = \sqrt{\Gamma\Omega}/2$, they lie at detuning $2h = \Delta_h = 2J - J_0^2/J$ for $J_0 \le J$. Everywhere else $|G(t)|$ eventually rises: above the threshold through the interference of the localized level with the band, below it through the band-edge beating of Fig.~\ref{fig:E_0.5_Gamma_0.5}. Close to the threshold the first revival comes ever later, so the thin gray band along the threshold in Fig.~\ref{fig:blind} narrows as $t_\text{max}$ grows. Along $\Gamma = \Omega/2$ it occurs at $\Omega t \simeq 61$, $152$ and $633$ for $E - E_c = -0.15$, $-0.10$ and $-0.05 \Omega$, in agreement with Eq.~\eqref{eq:tfirst}, which gives $61$, $147$ and $631$. The bath-sensitive measures have no such zero set: at the point of Fig.~\ref{fig:E_0.5_Gamma_1.0}, where $\cN_\text{BLP} = 0$, $\cN_\text{RHP}$ and $\cN_\text{MSV}$ remain finite at every finite temperature.

\begin{figure}[t]
    \centering
    \includegraphics[width = \linewidth]{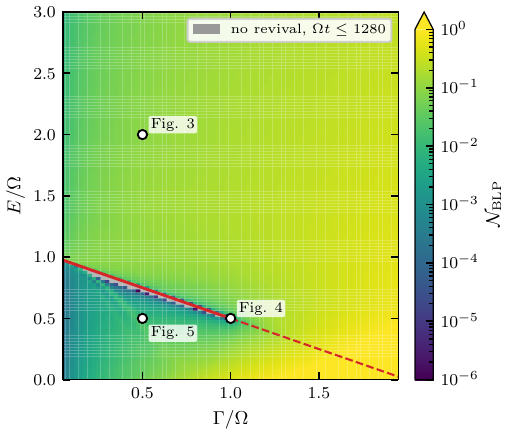}
    \caption{
    $\cN_\text{BLP}$ over the $(E, \Gamma)$ plane, evaluated on $\Omega t_\text{max} = 1280$; since it depends on $|G(t)|$ alone, the map holds for every state of the reservoir, and $\cN_\text{LPP}$ vanishes on the same set. Gray: no revival of $|G(t)|$ within the window. Red line: the bound-state threshold $|E| = \Omega - \Gamma/2$, solid for $\Gamma \leq \Omega$, where $\cN_\text{BLP} = 0$ exactly, and dashed beyond. The gray band along the solid segment narrows as $t_\text{max}$ grows, since revivals near the threshold set in at times $\propto(E - E_c)^{-2}$, Eq.~\eqref{eq:tfirst}. Circles mark the parameters of Fig.~\ref{fig:E_2.0_Gamma_0.5} (above the threshold), Fig.~\ref{fig:E_0.5_Gamma_1.0} (on it, at the end of the blind segment) and Fig.~\ref{fig:E_0.5_Gamma_0.5} (below it). Only $E \geq 0$ is shown, since $|G(t)|$ is invariant under $E \to -E$; the color scale is clipped at $1$.
    }
    \label{fig:blind}
\end{figure}

\paragraph*{Chemical potential.} Theorem~\ref{thm:independence} concerns the bath \emph{state} as a whole, of which temperature is only one parameter, with the chemical potential being another. By Eq.~\eqref{eq:Zbound} the noise term is confined to $0 \leq Z_\beta(t) \leq 1 - |G(t)|^2$. At $\mu = 0$ a temperature scan explores part of this interval; a scan of $\mu$ through the band at low temperature explores all of it, from an empty band ($\mu < -\Omega$) to a filled one ($\mu > \Omega$). Figure~\ref{fig:mu} shows the result at $T = 0.05 \Omega$ for the three parameter points of Figs.~\ref{fig:E_2.0_Gamma_0.5}--\ref{fig:E_0.5_Gamma_0.5}. Although the filling $Z_\beta/(1 - |G|^2)$ runs over its full range $[0,1]$, $\cN_\text{BLP}$ and $\cN_\text{LPP}$ are unchanged to machine precision across all $61$ values of $\mu$. The bath-sensitive measures, by contrast, change by factors of $1.6$ ($\cN_\text{MSV}$), $2.0$ ($\cN_\text{RHP}$) and $2.4$ ($\cN_\text{RHPE}$) at $E = 2\Omega$, and by several orders of magnitude on the threshold, at $E = \Omega/2$, $\Gamma = \Omega$; below it $\cN_\text{MSV}$ still changes by a factor of $2.1$, while $\cN_\text{RHP}$, dominated by its bath-independent floor, changes by only $2\%$.

\begin{figure*}[t]
    \centering
    \includegraphics[width = \linewidth]{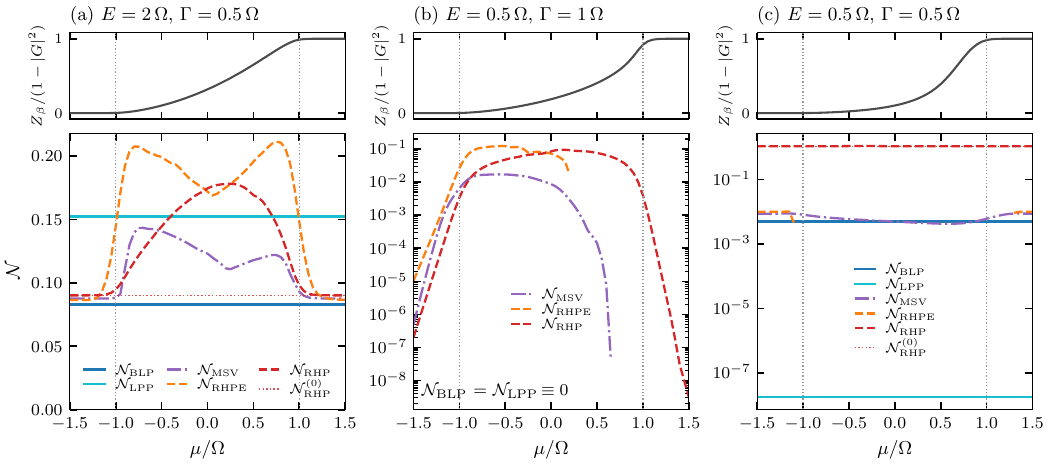}
    \caption{
    Measures against the chemical potential of the reservoir at $T = 0.05 \Omega$, for (a) $E = 2\Omega$, $\Gamma = \Omega/2$, (b) $E = \Omega/2$, $\Gamma = \Omega$ and (c) $E = \Gamma = \Omega/2$ --- above, on and below the bound-state threshold, the points of Figs.~\ref{fig:E_2.0_Gamma_0.5}--\ref{fig:E_0.5_Gamma_0.5} --- with (b) and (c) on a logarithmic scale. Top strips: the filling $Z_\beta(t_\text{max})/[1 - |G(t_\text{max})|^2]$, which spans its full range $[0,1]$ as $\mu$ crosses the band (dotted lines, $\mu = \pm\Omega$). $\cN_\text{BLP}$ and $\cN_\text{LPP}$ are unchanged to machine precision throughout, and in (b) vanish identically. Outside the band, where $f(\omega)$ is flat over the spectral support, the bath-sensitive measures reach plateaus that agree between the empty and the filled bath, and $\cN_\text{RHP}$ meets the bath-independent floor $\cN^{(0)}_\text{RHP}$ of Eq.~\eqref{eq:rhpfloor} (dotted). In (b) that floor is zero and all bath-sensitive non-Markovianity is confined to the band; in (c) it dominates, and $\cN_\text{RHP}$ exceeds it by at most $2\%$. Window $\Omega t_\text{max} = 40$: the flatness of $\cN_\text{BLP}$ and $\cN_\text{LPP}$, and the comparison with the floor, which is evaluated on the same window, hold at any $t_\text{max}$.
    }
    \label{fig:mu}
\end{figure*}

The shape of these curves is set by where the Fermi edge lies relative to the band, through the flat-$f$ argument below Eq.~\eqref{eq:rhpfloor}. Where $\mu$ lies outside the band, $f(\omega)$ is flat over the spectral support and the bath-sensitive measures settle onto plateaus. The empty and filled bands give the same values, since their maps differ only by $\bar f \to 1 - \bar f$, a particle--hole conjugation that leaves every measure invariant: at $E = 2\Omega$, $\cN_\text{MSV} = 0.0877$ at both ends, equal to within $5 \times 10^{-7}$. There $\cN_\text{RHP}$ also coincides with the floor $\cN^{(0)}_\text{RHP} = 0.0901$ computed from $G(t)$ alone. On the threshold, at $E = \Omega/2$, $\Gamma = \Omega$, the floor vanishes, because $\gamma(t) \geq 0$, and all three bath-sensitive measures are confined to the window in which the Fermi edge lies inside the band: $\cN_\text{RHP}$ falls from $0.095$ to $3 \times 10^{-9}$ as $\mu$ leaves the band from above. The non-Markovianity these measures register at this point is produced entirely by the Fermi edge. Below the threshold, at $E = \Gamma = \Omega/2$, the floor dominates instead: $\cN_\text{RHP}$ equals $\cN^{(0)}_\text{RHP} = 1.057$ outside the band and exceeds it by at most $0.021$ inside, while $\cN_\text{MSV}$ varies between $4.2 \times 10^{-3}$ and $8.7 \times 10^{-3}$, with equal plateaus at both ends. $\cN_\text{RHPE}$ is nonzero only where the band is nearly empty or nearly full, $\mu \lesssim - 1.1 \Omega$ or $\mu \gtrsim 1.3\Omega$. There $Z_\beta \simeq \bar f (1 - |G|^2)$ with $\bar f \to 0$ or $1$, the concurrence~\eqref{eq:concurrence} reduces to $|G(t)|$, and $\cN_\text{RHPE} = 9.9 \times 10^{-3}$ is twice the trace-distance value $\cN_\text{BLP} = 4.94 \times 10^{-3}$, since it counts each revival of $|G|$ once on the way up and once on the way down; at intermediate filling the concurrence dies before $|G|$ revives. At $T = 0.5 \Omega$ the same picture holds with a smoother dependence on $\mu$, and $\cN_\text{BLP}$ and $\cN_\text{LPP}$ again do not move.

\section{Conclusion} \label{sec:conclusion}

Our work explores the connection between non-Markovianity measures and the parameters of the environment in fermionic quadratic open quantum systems. As our first main result, we have proven that the Breuer--Laine--Piilo and Lorenzo--Plastina--Paternostro measures, built on the trace distance and on the volume of accessible states, respectively, are independent of the state of the reservoir, and in particular of its temperature and chemical potential (Theorem~\ref{thm:independence}). The proof rests on the affine structure of the reduced dynamics: for a single fermionic site coupled to an arbitrary quadratic reservoir, pairing terms included, the bath state enters only through a single scalar, $Z_\beta(t)$, multiplying a traceless shift along $\hat\sigma^z$, Eq.~\eqref{eq:affine}, which cancels from both measures. The only assumptions are an initial product state between the system and the environment and a parity-even reservoir, $[\hat\rho_B,\hat\cP] = 0$, with $\hat\cP$ defined in Eq.~\eqref{eq:parity}. A bath-state-dependent BLP or LPP measure therefore requires either interactions beyond quadratic order or initial system--bath correlations.

By contrast, the entanglement- and divisibility-based Rivas--Huelga--Plenio measures and the entropic Megier--Smirne--Vacchini measure are sensitive to the reservoir state. For the quantum Jensen--Shannon divergence this happens through the mean occupation of the pair of input states, which the trace distance ignores (Theorem~\ref{thm:qjsd}); in the unital limit the dependence disappears and $\cN_\text{MSV}$ vanishes exactly when $\cN_\text{BLP}$ does, Eq.~\eqref{eq:msvinfT}. For the divisibility measure we obtained a bath-independent floor fixed by the propagator alone, Eq.~\eqref{eq:rhpfloor}, saturated whenever the occupation of the reservoir is flat over the band, so that all of its dependence on temperature and chemical potential is carried by the Fermi edge within the band, Eq.~\eqref{eq:rhpinfT}.

To model the dynamics of a single fermionic site coupled to a free-fermionic chain, we use an exact time-local master equation, Eq.~\eqref{eq:master_equation}, valid both when the system energy lies within the band and when a localized level splits off from it. It takes the form of the master equation of Ref.~\cite{Mitchison2018} for a pair of coupled fermionic modes in contact with two reservoirs, of which our model is a special case.

Numerical simulations for this model illustrate the general results at three points, above, on and below the bound-state threshold. Above it, $\cN_\text{BLP}$ and $\cN_\text{LPP}$ are constant to machine precision over three orders of magnitude in temperature and across the full range of band filling, and the time-local master equation, which carries the bath state explicitly through its rates, reproduces them independently; over the same temperatures $\cN_\text{MSV}$, $\cN_\text{RHPE}$ and $\cN_\text{RHP}$ vary by $34\%$, $11\%$ and $60\%$, respectively, and not always monotonically. The minimum of $\cN_\text{MSV}$ marks a change of the optimal pair of input states, itself a bath-state effect. On the threshold, at $E = \Omega/2$, $\Gamma = \Omega$, $\cN_\text{BLP} = \cN_\text{LPP} = 0$ identically, while the three bath-sensitive measures remain finite at low temperature: there the non-Markovianity lives entirely in the bath-dependent translation of the Bloch ball, which the equal-weight trace distance cannot register, although the Helstrom norm with unequal priors does. Below the threshold the propagator decays through the band and the dependence on the bath state becomes much weaker than above it, while the beating of the two band edges makes $\cN_\text{RHP}$ grow without bound with the integration window.

Theorem~\ref{thm:independence} and the map~\eqref{eq:mapC} hold for arbitrary fermionic quadratic Hamiltonians, pairing terms included, but the explicit forms of the measures obtained here rely on $F = 0$: with $F \neq 0$ the BLP optimization no longer reduces to polar and equatorial pairs (Appendix~\ref{app:BLP}), and only $\cN_\text{LPP}$ carries over unchanged. A natural next step is therefore an explicit time-local master equation for the general fermionic quadratic model. We also expect the BLP and LPP measures to depend on the bath state once quartic interactions are present, since these allow the bath state to enter the linear part of the reduced map, and not only the shift; when, and how strongly, this dependence appears in specific interacting models remains an open question. Finally, we note that the bath-state independence of $\cN_\text{BLP}$ does not extend to quadratic bosonic models, where a temperature dependence has been reported for the Caldeira--Leggett model~\cite{Einsiedler2020, Manara2026, Pellitteri2026}; understanding the origin of this difference between fermionic and bosonic quadratic systems is an interesting direction for future work.

\section*{Acknowledgments}

We thank Mark T. Mitchison, and T. J. G. A. thanks Bassano Vacchini, for fruitful discussions. M. C. acknowledges funding from Project QAMALA (Quantum Algorithms and MAchine LeArning) financed by the Maltese Ministry for Education, Sport, Youth, Research, and Innovation (MEYR) Grant ``2025301 UM MinED''. I. K. and T. J. G. A. acknowledge co-funding by the European Union (Quantum Flagship project ASPECTS, Grant Agreement No.\ 101080167). Views and opinions expressed are however those of the authors only and do not necessarily reflect those of the European Union or the Research Executive Agency. Neither the European Union nor the granting authority can be held responsible for them.

\section*{Data Availability}

The measures were computed directly from $G(t)$ and $Z_\beta(t)$ with our own Python code based on NumPy and SciPy; the master-equation cross-checks were performed with QuTiP~\cite{Lambert2026}. The code and data supporting the figures are available at~\url{https://github.com/aspects-quantum/bath-state-nonmarkovianity}.

\section*{Artificial Intelligence Disclosure}

Claude (Opus 5.5 and 5.0) was used, under the authors' direction, to assist with literature searches, the examination of mathematical arguments, development of the codebase, and the preparation and revision of the manuscript. The authors checked all scientific statements, derivations, and references, and take full responsibility for the content of this work.

\onecolumngrid
\appendix

\section{Details for Solving the Dynamical Map with \texorpdfstring{$F = 0$}{F = 0}} \label{app:dynamical_map}

Since the Hamiltonian~\eqref{eq:H} is quadratic, the Heisenberg equations of motion are linear. In the eigenmode basis~\eqref{eq:aq} they read
\begin{align}
    \ii\partial_t\hat d(t) &= E\hat d(t) + \sum_q \tau_q\hat a_q(t), \label{eq:eom_d} \\
    \ii\partial_t\hat a_q(t) &= \omega_q\hat a_q(t) + \tau_q^*\hat d(t). \label{eq:eom_a_q}
\end{align}
Following Ref.~\cite{Mitchison2018}, we solve these in the Laplace domain, writing $\tilde A(z) = \int_0^\infty e^{-zt} \hat A(t) \dd{t}$. Transforming Eqs.~\eqref{eq:eom_d} and~\eqref{eq:eom_a_q} turns them into an algebraic system,
\begin{align}
 (z + \ii E)\tilde d(z) + \ii\sum_q\tau_q\tilde a_q(z) &= \hat d, \\
 (z + \ii\omega_q)\tilde a_q(z) + \ii\tau_q^*\tilde d(z) &= \hat a_q,
\end{align}
whose solution for the system row is obtained by eliminating $\tilde a_q$,
\begin{equation}
    \tilde d(z) = \tilde G(z)\hat d - \ii\tilde G(z)\sum_q\frac{\tau_q}{z + \ii\omega_q}\hat a_q,~\tilde G(z) = \left[z + \ii E + \tilde W(z)\right]^{-1},
    \label{eq:laplace-sol}
\end{equation}
with the reservoir kernel
\begin{equation}
    \tilde W(z) = \sum_q\frac{|\tau_q|^2}{z + \ii\omega_q} = \int \frac{\cJ(\omega)}{z + \ii\omega} \dd{\omega}.
    \label{eq:Wtilde}
\end{equation}
Inverting Eq.~\eqref{eq:laplace-sol} term by term gives
\begin{equation}
    \hat d(t) = G(t)\hat d + \hat\xi(t),~\hat\xi(t) = \sum_q g_q(t)\hat a_q,
    \label{eq:propagatorA}
\end{equation}
with $G(0) = 1$ and $g_q(0) = 0$, now in the eigenmode basis. The noise amplitudes follow immediately from the convolution theorem, since Eq.~\eqref{eq:laplace-sol} is a product of $\tilde G(z)$ with
$(z + \ii\omega_q)^{-1}$:
\begin{equation}
    g_q(t) = -\ii\tau_q\int_0^t G(t - s)e^{-\ii\omega_q s} \dd{s} = -\ii\tau_q e^{-\ii\omega_q t}g(\omega_q,t),~g(\omega,t) \equiv \int_0^t e^{\ii\omega t'}G(t') \dd{t'}.
\end{equation}

The kernel~\eqref{eq:Wtilde} is the impurity self-energy in disguise. Continuing $z \to -\ii\omega + 0^+ $ gives $\tilde W(-\ii\omega + 0^+ ) = \ii\Sigma(\omega)$, with $\Sigma(\omega)$ the retarded self-energy of Eq.~\eqref{eq:selfenergy}, so that
\begin{equation}
    \tilde G(-\ii\omega + 0^+ ) = \frac{\ii}{\omega - E - \Sigma(\omega)}.
\end{equation}
The analytic structure of $\tilde G$ therefore reproduces the two contributions to the propagator identified in Eq.~\eqref{eq:G-bound}. Inverting the Laplace transform by deforming the Bromwich contour onto the branch cut of $\sqrt{z^2 + \Omega^2}$ joining $z = \pm\ii\Omega$~\cite{Mitchison2018}, the cut contributes $\int_{-\Omega}^{\Omega} e^{-\ii\omega t}\varphi(\omega) \dd{\omega}$ with $\varphi$ given by Eq.~\eqref{eq:varphi}. In addition, $\tilde G$ has a pole wherever $z + \ii E + \tilde W(z) = 0$, i.e.\ at $z_b = -\ii\omega_b$ with $\omega_b = E + \Sigma(\omega_b)$, whose residue is
\begin{equation}
    \left[1 + \tilde W'(z_b)\right]^{-1} = \left[1 - \Sigma'(\omega_b)\right]^{-1} = R_b,
\end{equation}
contributing $R_be^{-\ii\omega_bt}$, in agreement with Eq.~\eqref{eq:G-bound}. Reference~\cite{Mitchison2018} excludes such poles by assumption, requiring $\sqrt{E^2 + \Gamma\Omega} < \Omega$ so that $G(t) \to 0$ and the system relaxes to a unique stationary state; the localized levels of Eq.~\eqref{eq:G-bound} are precisely the poles that appear when that condition fails.

Finally, Heisenberg evolution is unitary, so $\{\hat d(t),\hat d^\dagger(t)\} = \hat U^\dagger\{\hat d,\hat d^\dagger\}\hat U = 1$ at all times. Expanding with Eq.~\eqref{eq:propagatorA} and using $\{\hat a_q,\hat a_{q'}^\dagger\} = \delta_{qq'}$ together with $\{\hat d,\hat a_q^\dagger\} = 0$ yields the sum rule
\begin{equation}
    |G(t)|^2 + \sum_q|g_q(t)|^2 = 1.
    \label{eq:sumrule}
\end{equation}

The bath state~\eqref{eq:gibbs} is Gaussian and commutes with $\hat N_B$, so $\ev*{\hat a_q} = 0$ and $\ev*{\hat a_q^\dagger \hat a_{q'}} = \delta_{qq'}f(\omega_q)$, with $f(\omega) = [e^{\beta(\omega - \mu)} + 1]^{-1}$. Hence $\ev*{\hat\xi(t)} = 0$, and the noise correlator is
\begin{equation}
    Z_\beta(t) \equiv \ev*{\hat\xi^\dagger(t)\hat\xi(t)} = \sum_{q,q'} g_q^*(t)g_{q'}(t)\ev*{\hat a_q^\dagger \hat a_{q'}} = \sum_q |g_q(t)|^2 f(\omega_q) = \int \cJ_\text{N}(\omega) f(\omega) |g(\omega,t)|^2 \dd{\omega},
    \label{eq:noise_app}
\end{equation}
where the last step uses $|g_q(t)|^2 = |\tau_q|^2|g(\omega_q,t)|^2$ together with the definition~\eqref{eq:Jdef} of the spectral density and the continuum limit~\eqref{eq:newns}. Expanding $|g(\omega,t)|^2$ as a double time integral in Eq.~\eqref{eq:noise_app} and performing the frequency integral first recovers Eq.~\eqref{eq:Zdouble} of the main text,
\begin{equation}
    Z_\beta(t) = \int_0^t \int_0^t G^*(t')\Phi(t' - t'')G(t'') \dd{t''} \dd{t'},
\end{equation}
so the two representations are equivalent; the single-integral form~\eqref{eq:noise_app} is manifestly non-negative and is the one used in the numerics. The temperature and chemical potential of the bath enter Eq.~\eqref{eq:noise_app} only through $f$, and do not appear in $G$ or in $g(\omega,t)$, consistent with the general result of Sec.~\ref{sec:quadratic}, here made explicit for the chain.

Applying the same continuum limit to the sum rule~\eqref{eq:sumrule} and using $0 \leq f(\omega) \leq 1$ bounds the noise term by
\begin{equation}
    0 \leq Z_\beta(t) \leq \int \cJ_\text{N}(\omega)|g(\omega,t)|^2 \dd{\omega} = 1 - |G(t)|^2,
    \label{eq:Zbound}
\end{equation}
the two limits corresponding to an empty ($f \to 0$) and a fully occupied ($f \to 1$) bath. Complete positivity was already established in Sec.~\ref{sec:quadratic} on general grounds; Eq.~\eqref{eq:Zbound} makes it explicit for this model, and provides a sharp numerical check on $Z_\beta$.

Setting $F(t) = 0$ in Eqs.~\eqref{eq:coh} and~\eqref{eq:pop} of the main text gives
\begin{equation}
    \ev*{\hat d(t)} = G(t)\ev*{\hat d},~\ev*{\hat d^\dagger(t)\hat d(t)} = |G(t)|^2\ev*{\hat d^\dagger\hat d} + Z_\beta(t),
    \label{eq:cohpopapp}
\end{equation}
the cross terms vanishing as before.

Reading off Eq.~\eqref{eq:cohpopapp}, the coherence gives
\begin{equation}
    \rho_{10}(t) = G(t)\rho_{10}(0),~\rho_{01}(t) = G^*(t)\rho_{01}(0),
\end{equation}
and the population gives
\begin{equation}
    \rho_{11}(t) = |G(t)|^2\rho_{11}(0) + Z_\beta(t).
    \label{eq:rho11}
\end{equation}
Using $\rho_{00}(0) + \rho_{11}(0) = 1$ to write $Z_\beta(t) = Z_\beta(t)[\rho_{00}(0) + \rho_{11}(0)]$,
and $\rho_{00}(t) = 1 - \rho_{11}(t)$, Eq.~\eqref{eq:rho11} becomes
\begin{equation}
    \rho_{11}(t) = \left(|G(t)|^2 + Z_\beta(t)\right)\rho_{11}(0) + Z_\beta(t)\rho_{00}(0),~\rho_{00}(t) = \left(1 - |G(t)|^2 - Z_\beta(t)\right)\rho_{11}(0) + \left(1 - Z_\beta(t)\right)\rho_{00}(0).
    \label{eq:rhodiag}
\end{equation}
Equation~\eqref{eq:Zbound} ensures that all four coefficients in Eq.~\eqref{eq:rhodiag} are non-negative, as complete positivity requires. Thus,
\begin{equation}
    \cM_t =
    \begin{pmatrix}
        1 - Z_\beta(t) & 0 & 0 & 1 - |G(t)|^2 - Z_\beta(t) \\
        0 & G(t) & 0 & 0 \\
        0 & 0 & G^*(t) & 0 \\
        Z_\beta(t) & 0 & 0 & |G(t)|^2 + Z_\beta(t)
    \end{pmatrix}.
    \label{eq:map}
\end{equation}

\section{Derivation of the Measures} \label{app:NM}

This appendix collects the definitions of the five measures introduced in Sec.~\ref{sec:NM} and derives their explicit forms for the model of Sec.~\ref{sec:model}.

The measures are defined again here for convenience:
\begin{align}
    \cN_\text{BLP} &= \max_{\hat\rho_{1,2}(0)} \int_0^\infty \max\left\{0, \partial_t D(\hat\rho_1(t), \hat\rho_2(t))\right\} \dd{t},~D(\hat\rho_1, \hat\rho_2) = \frac{1}{2}\lVert\hat\rho_1 - \hat\rho_2\rVert_1, \\
    \cN_\text{LPP} &= \int_0^\infty \max\left\{0, \partial_t \det\mathbf{A} \right\} \dd{t},~(\mathbf{A})_{ij} = \frac{\Tr{\hat\sigma_i \cM_t(\hat\sigma_j)}}{2},~i, j \in \{x,y,z\}, \\
    \cN_\text{MSV} &= \max_{\hat\rho_{1,2}(0)}\int_0^\infty \max\left\{0,\partial_t\sqrt{J\left(\hat\rho_1(t),\hat\rho_2(t)\right)}\right\}\dd{t},~J(\hat\rho_1,\hat\rho_2) = \cS\left(\frac{\hat\rho_1 + \hat\rho_2}{2}\right) - \frac{1}{2}\left[\cS(\hat\rho_1) + \cS(\hat\rho_2)\right], \\
    \cN_\text{RHPE} &= \int_0^\infty \left| \partial_t \cE(\hat\rho_{SA}(t)) \right| \dd{t} - \Delta \cE,~\Delta \cE = \cE(\hat\rho_{SA}(0)) - \cE(\hat\rho_{SA}(\infty)), \\
    \cN_\text{RHP} &= \int_0^\infty \lim_{\varepsilon \to 0^+} \frac{\lVert(\cM_{(t + \varepsilon, t)} \otimes \dI)\ketbra{\Psi}{\Psi}\rVert_1 - 1}{\varepsilon} \dd{t},~\cM_{(t + \varepsilon, t)} \equiv \cM_{t + \varepsilon} \cM_t^{-1},
\end{align}
where $\cS$ is the von Neumann entropy taken in base $2$, and $\sqrt{J}$ is a distance in the strict mathematical sense~\cite{Virosztek2021, Sra2021}; its revivals admit an interpretation in terms of information backflow~\cite{Megier2021}. $\hat\rho_{SA}$ is the system and a non-interacting ancilla prepared in a maximally entangled state $\ketbra{\Psi}{\Psi}$, such that $\hat\rho_{SA}(t) \equiv (\cM_t \otimes \dI)\ketbra{\Psi}{\Psi}$, and $\cE$ denotes some entanglement measure. Note that $\cM_t$ is invertible only where $G(t) \neq 0$: from Eq.~\eqref{eq:map}, $\det\cM_t = |G(t)|^4$, so the map degenerates at any zero of the propagator. This is not a removable technicality: near a simple zero at $t_0$ one has $\gamma(t) = -\partial_t\ln|G(t)|^2 \sim -2/(t - t_0)$, and since $\Lambda^+(t) + \Lambda^-(t) = \gamma(t)$, both canonical rates diverge to $-\infty$ as $t \to t_0^+ $, with $\max\{0,-\Lambda^+\} + \max\{0,-\Lambda^-\} \sim 2/(t - t_0)$. The integrand of Eq.~\eqref{eq:RHPresult} is therefore not integrable and $\cN_\text{RHP}$ diverges logarithmically. A pseudoinverse regularizes the matrix but does not restore the integral; on a numerical grid one obtains a finite value growing as $\log(1/\Delta t)$, so any quoted $\cN_\text{RHP}$ must be checked against grid refinement.

The integrals are written over $[0,\infty)$ but are truncated numerically at a time $t_\text{max}$ with $\Omega t_\text{max} \gg 1$. How quickly they converge depends on the measure and on the regime.

$\cN_\text{BLP}$ and $\cN_\text{LPP}$ see the map only through $|G(t)|$, and they converge even in the bound-state regime, where $|G(t)|$ does not decay. There the localized level contributes a pure phase of constant modulus, which carries no backflow on its own, while the band term in Eq.~\eqref{eq:G-bound} decays as $t^{-3/2}$ for the semicircular density. The per-window increments $\Delta D_u^{(n)}$ therefore fall off fast enough for $\sum_n\Delta D_u^{(n)}$ to converge.

The remaining three measures also involve $Z_\beta(t)$, whose tail is controlled by the Fermi function. Above the threshold, at any $T > 0$, the canonical rates inherit the $t^{-3/2}$ band-edge decay, and the truncation error falls off as $t_\text{max}^{-1/2}$. For $E = 2\Omega$, $\Gamma = \Omega/2$ it is still of order $6\%$ at $\Omega t_\text{max} = 320$, so the values quoted in Sec.~\ref{sec:results} should be read at the stated $t_\text{max}$ rather than as converged limits. On the threshold, at the point of Fig.~\ref{fig:E_0.5_Gamma_1.0}, all measures are converged within $\Omega t_\text{max} = 40$ at every $T > 0$. Below the threshold, with no localized level, $G(t)$ decays to zero and is eventually dominated by the two band edges. Their beating gives $\gamma(t)$ a periodic part of fixed amplitude (Eqs.~\eqref{eq:G-edges} and~\eqref{eq:gamma-edges}), so that $\gamma(t)$ turns negative once per period $\pi/\Omega$ at all late times. $\cN_\text{BLP}$, $\cN_\text{LPP}$ and $\cN_\text{MSV}$, whose increments are weighted by the shrinking Bloch vectors, still converge, though slowly. $\cN^{(0)}_\text{RHP}$, and with it $\cN_\text{RHP}$, instead grows without bound with $t_\text{max}$, so below the threshold $\cN_\text{RHP}$ is meaningful only at a stated window (Fig.~\ref{fig:E_0.5_Gamma_0.5}).

At exactly $T = 0$ the sharp Fermi edge leaves $\Lambda^\pm(t) \sim t^{-1}$. $\cN_\text{MSV}$, $\cN_\text{RHPE}$ and $\cN_\text{RHP}$ then grow logarithmically with $t_\text{max}$, while $\cN_\text{BLP}$ and $\cN_\text{LPP}$ remain finite.

\subsection{BLP} \label{app:BLP}

We will first prove that the optimal pair of input states should be antipodal, as shown, differently and more generally, by~\citet{Wissmann2012}. Taking two generic density matrices as inputs
\begin{equation}
    \rho_1(0) =
    \begin{pmatrix}
    a & b \\
    b^* & 1 - a
    \end{pmatrix},~
    \rho_2(0) =
    \begin{pmatrix}
    c & d \\
    d^* & 1 - c
    \end{pmatrix},
\end{equation}
with $a, c \in [0, 1]$, $|b|^2 \leq a(1 - a)$ and $|d|^2 \leq c(1 - c)$, the trace distance is
\begin{equation}
    D(\rho_1(t), \rho_2(t)) = \sqrt{(a - c)^2|G(t)|^4 + |b - d|^2|G(t)|^2}.
\end{equation}
Let us define $\zeta \equiv (a - c)^{2} \ge 0$ and $\eta \equiv |b - d|^{2} \ge 0$,
and write the objective functional as
\begin{equation}
    \cB[\zeta,\eta] \equiv \int_0^\infty \max\left\{0, \partial_{t}\left(|G(t)| \sqrt{\zeta|G(t)|^{2}+\eta}\right)\right\} \dd{t},~\cN_\text{BLP} = \max_{\{\zeta, \eta\}} \cB[\zeta,\eta].
    \label{eq:blp-functional}
\end{equation}
Under a rescaling $(\zeta,\eta) \to (\lambda\zeta,\lambda\eta)$ with
$\lambda \ge 0$, the square root in Eq.~\eqref{eq:blp-functional} acquires an
overall factor $\sqrt{\lambda}$, which passes through both $\partial_{t}$ and
$\max\{0,\cdot\}$, so that
\begin{equation}
    \cB[\lambda\zeta, \lambda\eta] = \sqrt{\lambda}\cB[\zeta,\eta].
    \label{eq:blp-scaling}
\end{equation}
Since $\cB$ is therefore non-decreasing along every ray from the origin, and strictly increasing wherever $\cB > 0$, the maximum occurs on the Pareto boundary of the feasible set
$(\zeta,\eta)$. Positivity of $\rho_1, \rho_2$ requires that $|b|^2 \leq a(1 - a)$ and $|d|^2 \leq c(1 - c)$, so that
\begin{equation}
    |b - d| \leq |b| + |d| \leq \sqrt{a(1 - a)} + \sqrt{c(1 - c)}.
\end{equation}
We can saturate both inequalities by first selecting opposite phases for $b$ and $d$, such that
\begin{equation}
    |b - d| = |b| + |d|,
\end{equation}
and that the coherence is maximal, such that
\begin{equation}
    |b| + |d| = \sqrt{a(1 - a)} + \sqrt{c(1 - c)},
\end{equation}
and so
\begin{equation}
    |b - d| = \sqrt{a(1 - a)} + \sqrt{c(1 - c)}.
    \label{eq:constraint}
\end{equation}
Now, we fix $\zeta = (a - c)^2$, and we maximize $\sqrt{a(1 - a)} + \sqrt{c(1 - c)}$. The function $\sqrt{x(1 - x)}$ peaks at $x = 1/2$, and given that $(a - c)^2$ is fixed, the maximum is achieved when
\begin{equation}
    a = \frac{1}{2}(1 + \sqrt{\zeta}),~c = \frac{1}{2}(1 - \sqrt{\zeta}).
\end{equation}
Substituting these into Eq.~\eqref{eq:constraint}, we obtain
\begin{align}
    |b - d| &= \sqrt{\frac{1}{2}(1 + \sqrt{\zeta})\left(1 - \frac{1}{2}(1 + \sqrt{\zeta})\right)} + \sqrt{\frac{1}{2}(1 - \sqrt{\zeta})\left(1 - \frac{1}{2}(1 - \sqrt{\zeta})\right)} \nonumber \\
    &= \sqrt{\frac{1}{2}(1 + \sqrt{\zeta})\frac{1}{2}(1 - \sqrt{\zeta})} + \sqrt{\frac{1}{2}(1 - \sqrt{\zeta})\frac{1}{2}(1 + \sqrt{\zeta})} \nonumber \\
    &= \frac{1}{2}\sqrt{1 - \zeta} + \frac{1}{2}\sqrt{1 - \zeta} \nonumber \\
    &= \sqrt{1 - \zeta}.
\end{align}
Since $\eta = |b - d|^2$, this gives
\begin{equation}
    \eta = 1 - \zeta,
\end{equation}
such that the Pareto front is defined as
\begin{equation}
    \zeta + \eta = 1,
\end{equation}
which implies that the BLP optimum is attained by a pair of antipodal states.

We can parametrize the frontier as $u \equiv \eta = 1 - \zeta \in [0,1]$, so that
\begin{equation}
    D_u(t)^2 \equiv (1-u)|G(t)|^4 + u|G(t)|^2,
\end{equation}
which is a convex combination of $|G(t)|^4$ and $|G(t)|^2$, with the two endpoints corresponding to
\begin{align}
    &u = 0: D_0 = |G|^2~\text{Polar pair}~\ketbra{0}{0},~\ketbra{1}{1}, \\
    &u = 1: D_1 = |G|~\text{Equatorial pair}~\ketbra{+}{+},~\ketbra{-}{-}.
\end{align}
Consequently, the BLP measure reduces to a one-parameter problem
\begin{equation}
    \cN_\text{BLP} = \max_{u \in [0, 1]} \cB[1 - u, u].
\end{equation}

In our case, we further tighten the values that $u$ can take. If we look at
\begin{equation}
    \partial_t (D_u(t)^2) = 2D_u(t)\partial_t D_u(t) = 2|G(t)|\partial_t|G(t)|(2(1 - u)|G(t)| + u),
\end{equation}
we note that the terms inside the bracket $2(1 - u)|G(t)| + u \geq 0~\forall u \in [0, 1]$. Since $D_u(t) \geq 0~\forall u \in [0, 1]$ as well, we have that
\begin{equation}
    \sgn(\partial_t D_u(t)) = \sgn(\partial_t|G(t)|)~\forall u \in [0, 1].
\end{equation}
Thus the set of time-intervals on which $D_u(t)$ increases does not depend on the pair of states: it is precisely the set of intervals on which $|G(t)|$ rises.

Now, let us suppose we have $n$ rising windows which we label as intervals $[t^{(n)}_a, t^{(n)}_b]$, where $|G(t^{(n)}_a)|$ increases to $|G(t^{(n)}_b)|$ such that it is continuous and monotonic during that window. Defining $\Delta D_u^{(n)} \equiv D_u(t^{(n)}_b) - D_u(t^{(n)}_a)$, we have that
\begin{equation}
    \cB[1 - u, u] = \sum_{n} \Delta D_u^{(n)}.
\end{equation}
Selecting the endpoints within each window, we get
\begin{align}
    &u = 0: \Delta D_0^{(n)} = |G(t^{(n)}_b)|^2 - |G(t^{(n)}_a)|^2, \\
    &u = 1: \Delta D_1^{(n)} = |G(t^{(n)}_b)| - |G(t^{(n)}_a)|.
\end{align}
Thus, the ratio between polar and equatorial input states is $\Delta D_0^{(n)}/\Delta D_1^{(n)} = |G(t^{(n)}_b)| + |G(t^{(n)}_a)|$. Therefore, whenever $|G(t^{(n)}_b)| + |G(t^{(n)}_a)| > 1$, e.g.\ when a bound state keeps $|G(t)|$ near $1$, polar input states dominate, whereas whenever $|G(t^{(n)}_b)| + |G(t^{(n)}_a)| < 1$, equatorial input states dominate.

The final step of the proof is to show that either $u = 0$ or $u = 1$ maximizes the BLP measure in our case, and not any other value of $u \in (0, 1)$. In general, writing $G_a \equiv |G(t^{(n)}_a)|$ and $G_b \equiv |G(t^{(n)}_b)|$ for brevity, for any $u$, we have that
\begin{align}
    \Delta D_u^{(n)} &= \sqrt{(1 - u)G_b^4 + u G_b^2} - \sqrt{(1 - u)G_a^4 + u G_a^2} \nonumber \\
    &= G_b\sqrt{(1 - u)G_b^2 + u} - G_a\sqrt{(1 - u)G_a^2 + u} \nonumber \\
    &= G_b\sqrt{G_b^2 + u(1 - G^2_b)} - G_a\sqrt{G_a^2 + u(1 - G^2_a)},
\end{align}
and
\begin{equation}
    \nu(u) \equiv \partial_u \Delta D_u^{(n)} = \frac{G_b(1 - G_b^2)}{2\sqrt{G_b^2 + u(1-G_b^2)}} - \frac{G_a(1 - G_a^2)}{2\sqrt{G_a^2 + u(1-G_a^2)}}.
\end{equation}
At $u = 0$, we get
\begin{equation}
    \nu(0) = \frac{1 - G_b^2}{2} - \frac{1 - G_a^2}{2} = \frac{1}{2}(G_a^2 - G_b^2) < 0,
\end{equation}
so $\Delta D_u^{(n)}$ initially decreases with $u$. Now,
\begin{equation}
    \nu(u) \geq 0 \iff \frac{G_b(1 - G_b^2)}{\sqrt{G_b^2 + u(1-G_b^2)}} \geq \frac{G_a(1 - G_a^2)}{\sqrt{G_a^2 + u(1-G_a^2)}}.
\end{equation}
Squaring both sides and cross-multiplying by the denominators results in
\begin{equation}
    \nu(u) \geq 0 \iff G_b^2(1 - G_b^2)^2(G_a^2 + u(1 - G_a^2)) - G_a^2(1 - G_a^2)^2(G_b^2 + u(1 - G_b^2)) \geq 0.
\end{equation}
This implies that the difference between the two sides is affine in $u$, and so $\nu(u)$ changes sign at most once on the interval $[0, 1]$. Hence, if $\nu(u) \leq 0$ throughout the interval $[0, 1]$, such that $\Delta D_u^{(n)}$ is non-increasing, the maximum lies at $u = 0$. Otherwise, $\nu(u)$ changes sign once, from negative to positive, so $\Delta D_u^{(n)}$ has a single interior minimum and the maximum is at an endpoint: either $u = 0$ (when $G_a + G_b > 1$) or $u = 1$ (when $G_a + G_b < 1$).

The maximizing endpoint is therefore fixed within each window by the sign of $G_a + G_b - 1$, which need not be the same in every window: $|G(0)| = 1$, while at long times $|G(t)| \to R_b$, so windows early and late in the evolution may favor different pairs. Since $\cN_\text{BLP}$ requires a single $u$ maximizing $\sum_n \Delta D_u^{(n)}$, we obtain in general the lower bound
\begin{equation}
    \cN_\text{BLP} \geq \max_{P \in \{|G(t)|, |G(t)|^2\}} \int_0^\infty \max\{0, \partial_t P \} \dd{t},
    \label{eq:blpbound}
\end{equation}
with equality whenever one endpoint is optimal in every window.

A sufficient condition for equality, for $\Gamma < 2\Omega$ where a single localized level exists, follows from Eq.~\eqref{eq:G-bound}. Since $\varphi \geq 0$ and $\int \varphi + \sum_b R_b = 1$, the band contribution to $G(t)$ is bounded in modulus by $1 - R_b$, so $|G(t)| \geq 2R_b - 1$ for all $t$; if $R_b \ge 3/4$ then $|G(t)| \geq 1/2$ throughout, every rising window has $G_a + G_b > 1$, and Eq.~\eqref{eq:blpbound} is saturated by the polar pair. The operational criterion, $\min_t|G(t)| \ge 1/2$, is considerably weaker. Scanning $E\in[0,3\Omega]$ and $\Gamma/\Omega\in\{0.2,0.5,1,1.5\}$ we find that a single endpoint is optimal almost everywhere: deep in the bound-state regime the polar pair wins in every window, while for small $|E|$, where $|G|$ decays through small values, the equatorial pair wins in every window. Windows of both types coexist only in a narrow band straddling the crossover (for $\Gamma = \Omega$, $0.65 \lesssim E/\Omega \lesssim 0.73$, where $\min_t|G(t)|$ passes through $1/2$), and there Eq.~\eqref{eq:blpbound} is a strict inequality.

\subsection{LPP}

Our next step is to evaluate the LPP measure. Computing $\mathbf{A}$, we find that
\begin{equation}
    \mathbf{A} =
    \begin{pmatrix}
    \Re(G(t)) & \Im(G(t)) & 0 \\
    -\Im(G(t)) & \Re(G(t)) & 0 \\
    0 & 0 & |G(t)|^2
    \end{pmatrix},~\det\mathbf{A} = |G(t)|^4,
\end{equation}
such that we simply get
\begin{equation}
    \cN_\text{LPP} = \int_0^\infty \max\left\{0, \partial_t|G(t)|^4 \right\} \dd{t}.
\end{equation}

\subsection{MSV} \label{app:MSV}

For $F = 0$ the pair dependence of $\cN_\text{MSV}$ reduces to a small number of parameters. We write each input in Bloch form, $\hat \rho_i(0) = (\dI + \bm v_i\cdot\hat{\bm\sigma})/2$, with $\bm v_i = (a_i\cos\phi_i,a_i\sin\phi_i,w_i)$ and $a_i^2 + w_i^2 \le 1$, where $\hat\sigma^z = \ketbra{1}{1} - \ketbra{0}{0}$, so that $w_i = 2\rho_{11} - 1$ and $a_ie^{\ii\phi_i} = 2\rho_{10}$. By Eq.~\eqref{eq:map} the coherences are simply multiplied by $G(t)$ and the populations obey Eq.~\eqref{eq:rho11}, so that
\begin{align}
    a_i(t) &= |G(t)|a_i(0), \\
    \phi_i(t) &= \phi_i(0) + \arg G(t), \\
    w_i(t) &= |G(t)|^2 w_i(0) + s(t), \\
    s(t) &\equiv |G(t)|^2 + 2Z_\beta(t)-1.
\end{align}
The shift $s$ is common to both states and carries all of the bath dependence. Since the rotation acts equally on both inputs and $J$ is unitarily invariant, $\arg G$ drops out and the azimuths enter only through their difference $\phi = \phi_2 - \phi_1$.
Defining
\begin{align}
    r_i &= |\bm v_i(t)|, \\
    m &= \frac{|\bm v_1(t) + \bm v_2(t)|}{2},
\end{align}
we get
\begin{equation}
    J(t) = h\left(\frac{1 + m}{2}\right) - \frac{1}{2}\left[h\left(\frac{1 + r_1}{2}\right) + h\left(\frac{1 + r_2}{2}\right)\right].
    \label{eq:Jgeneral}
\end{equation}
The optimization in $\cN_\text{MSV}$ is therefore a compact five-parameter problem in $(a_1,w_1,a_2,w_2,\phi)$. Three two-parameter families admit closed forms, and since $\cN_\text{MSV}$ is a maximum over pairs, each of them provides a lower bound.

\emph{(a) Pairs differing only in their populations.} Both Bloch vectors lie on the $z$ axis, $a_1 = a_2 = 0$, so both states are diagonal in the occupation basis. Their occupations evolve as $p_i(t) = |G(t)|^2p_i(0) + Z_\beta(t)$, so that, in the variables of Sec.~\ref{sec:qjsd}, $c(t) = |G(t)|^2c_0 + Z_\beta(t)$ and $\Delta(t) = |G(t)|^2\Delta_0$. With $c_0$ and $\Delta_0$ the initial mean occupation and gap, constrained by $0 \le c_0\pm\Delta_0/2 \le 1$, we get
\begin{equation}
    J_\text{pop} = h\left(|G|^2 c_0 + Z_\beta\right) - \frac{1}{2}\left[ h\left(|G|^2\left(c_0 + \frac{\Delta_0}{2}\right) + Z_\beta\right) + h\left(|G|^2\left(c_0 - \frac{\Delta_0}{2}\right) + Z_\beta\right) \right].
    \label{eq:Jpop}
\end{equation}
The polar pair is the corner $c_0 = 1/2$, $\Delta_0 = 1$; the rest of the family consists of pairs of \emph{mixed} states sitting at arbitrary heights on the axis. The gap decays as $|G|^2$ for every member of the family, so $D = |\Delta|$ differs between pairs only by the constant factor $\Delta_0$: all pairs increase on the same set of intervals, those on which $|G|$ rises. This is the property that makes the window-by-window comparison of Appendix~\ref{app:BLP} possible, and it fails for $J$. By Eq.~\eqref{eq:Jdot} the sign of $\partial_t J$ also involves $\partial_c J\partial_t c$, whose sign is that of $(c - 1/2)\partial_t c$ by Theorem~\ref{thm:qjsd}(ii), and $c(t) = |G|^2c_0 + Z_\beta$ depends on the pair through $c_0$. Different members of the family therefore rise on different sets of intervals, and no single endpoint can be argued to dominate in all of them.

\emph{(b) Pairs differing only in their coherences.} Both states sit at the same height, $w_1 = w_2 = w$, with opposite coherences, $a_1 = a_2 = a$ and $\phi = \pi$. Their populations therefore coincide at all times, at the common value fixed by $z \equiv |G|^2 w + s$, while their transverse components remain equal and opposite. The two Bloch radii coincide, $r_1 = r_2 = r = \sqrt{|G|^2 a^2 + z^2}$, and the transverse parts cancel in the mean, leaving $m = |z|$, so that Eq.~\eqref{eq:Jgeneral} collapses to
\begin{equation}
    J_\text{coh} = h\left(\frac{1 + |z|}{2}\right) - h\left(\frac{1 + r}{2}\right).
\end{equation}
Since $r \ge |z|$ and $h((1 + x)/2)$ is decreasing, $J_\text{coh} \ge 0$, with equality only when $|G|a = 0$. The equatorial pair is $a = 1$, $w = 0$.

\emph{(c) Antipodal pairs.} Taking $\bm v_2 = -\bm v_1$ with $\bm v_1 = (a,0,w)$ reproduces the family used for the trace distance above, with $a = \sin\theta$ and $w = \cos\theta$ interpolating between the polar ($\theta = 0$) and equatorial ($\theta = \pi/2$) pairs. Here the mean Bloch vector is pair independent, $m = |s(t)|$, and
\begin{equation}
    J_\text{anti} = h\left(\frac{1 + |s|}{2}\right) - \frac{1}{2}\left[h\left(\frac{1 + r_1}{2}\right) + h\left(\frac{1 + r_2}{2}\right)\right],~r_{1,2} = \sqrt{|G|^2a^2 + \left(s \pm |G|^2w\right)^2}.
\end{equation}

Collecting these,
\begin{equation}
    \cN_\text{MSV} \ge \max_{P\in\{J_\text{pop}(t), J_\text{coh}(t), J_\text{anti}(t)\}} \int_0^\infty \max\left\{0,\partial_t\sqrt{P}\right\}\dd{t},
    \label{eq:msvbound}
\end{equation}
the analogue of Eq.~\eqref{eq:blpbound}. Unlike the trace-distance case, however, we have no proof that any of these families contains the optimum: both ingredients of the argument of Appendix~\ref{app:BLP} fail here, since $\sqrt{J}$ obeys no scaling relation of the form Eq.~\eqref{eq:blp-scaling}, and the set of intervals on which the measure rises is pair dependent.

Table~\ref{tab:msvfamilies} compares the three families with each other, with the polar pair, and with a multistart optimization over all five parameters. Within the antipodal family~(c) the polar pair remains optimal in every case examined, so the trace-distance reduction carries over there; what improves on it is leaving that family. Pairs of mixed diagonal states, family~(a), exceed the polar value by up to $35\%$ at low temperature (for $E = 2\Omega$, $\Gamma = \Omega/2$ at $T = 0.02\Omega$ the optimum is $(c_0,\Delta_0) \simeq (0.11,0.22)$), while the coherence pairs of family~(b), lifted off the equator, also exceed it and give the best of the three bounds in the weak-revival regime, $E = \Omega/2$ at $T = \Omega$. Both exploit the dependence on the mean occupation that the trace distance does not have. The unrestricted optimum exceeds the best family value by at most $7\%$, in the weak-signal case $E = \Omega/2$, $T = \Omega$, where the optimal pair differs in both its populations and its coherences, and by about $1.4\%$ at $E = \Omega/2$, $T = 0.3\Omega$; elsewhere the two agree to the resolution of the scan. The values quoted in the main text are therefore lower bounds on $\cN_\text{MSV}$. This suffices for the argument of Sec.~\ref{sec:qjsd}, which requires a strictly positive value at finite temperature together with the exactly vanishing value at $\beta \to 0$, where every pair gives zero.

\begin{table}[t]
    \caption{
    Revivals of $\sqrt{J}$ maximized within each family of Appendix~\ref{app:MSV}, compared with the polar pair, and with a multistart Nelder--Mead optimization over all five parameters of Eq.~\eqref{eq:Jgeneral} (last column, $40$ restarts). Every entry is a lower bound on $\cN_\text{MSV}$. The families are maximized on a grid and the free column by a local search, so the latter sits marginally below the former in two rows; the discrepancy is at most $0.06\%$. Time integration to $\Omega t_\text{max} = 100$ on $2001$ points.
    }
    \label{tab:msvfamilies}
    \begin{ruledtabular}
    \begin{tabular}{cccccccc}
        $E/\Omega$ & $\Gamma/\Omega$ & $T/\Omega$ & \multicolumn{1}{c}{polar} & \multicolumn{1}{c}{(a) populations} & \multicolumn{1}{c}{(b) coherences} & \multicolumn{1}{c}{(c) antipodal} & \multicolumn{1}{c}{free} \\ \colrule
        0.5 & 1.0 & 0.02 & 0.00944 & 0.00944 & 0.00777 & 0.00944 & 0.00944 \\
        0.5 & 1.0 & 0.30 & 0.00234 & 0.00234 & 0.00214 & 0.00234 & 0.00237 \\
        0.5 & 1.0 & 1.00 & 0.00012 & 0.00014 & 0.00017 & 0.00012 & 0.00018 \\
        1.0 & 0.5 & 0.02 & 0.09978 & 0.10681 & 0.09756 & 0.09978 & 0.10680 \\
        1.0 & 0.5 & 0.30 & 0.07719 & 0.07719 & 0.05700 & 0.07719 & 0.07719 \\
        1.0 & 0.5 & 1.00 & 0.08437 & 0.08437 & 0.06084 & 0.08437 & 0.08437 \\
        2.0 & 0.5 & 0.02 & 0.11356 & 0.15377 & 0.15065 & 0.11356 & 0.15368 \\
        2.0 & 0.5 & 0.30 & 0.11375 & 0.11375 & 0.08990 & 0.11375 & 0.11375 \\
        2.0 & 0.5 & 1.00 & 0.11572 & 0.11572 & 0.07815 & 0.11572 & 0.11572 \\
    \end{tabular}
    \end{ruledtabular}
\end{table}

\subsection{RHPE}

Now, we calculate the RHPE measure. Using the concurrence
$\cC$~\cite{Wootters1998} as our entanglement measure, and the Bell
state $\ket{\Psi} = (\ket{00} + \ket{11})/\sqrt{2}$ as the maximally entangled
state, we have that
\begin{equation}
    \cC(\hat\rho_{SA}(t)) = \max\left\{0, |G(t)| - \sqrt{Z_{\beta}(t)(1 - |G(t)|^{2} - Z_{\beta}(t))} \right\},
    \label{eq:concurrence}
\end{equation}
such that
\begin{equation}
    \cN_\text{RHPE} = \int_{0}^\infty|\partial_{t}\cC(\hat\rho_{SA}(t))|\dd{t} - [\cC(\hat\rho_{SA}(0)) - \cC(\hat\rho_{SA}(\infty))],~\cC(\hat\rho_{SA}(0)) = 1,
\end{equation}
which, importantly, contains the temperature dependence via $Z_{\beta}(t)$, unlike the BLP and LPP measures.

\subsection{RHP} \label{app:RHP}

Finally, we investigate the RHP measure. In the limit $\varepsilon \to 0$, the solution of Eq.~\eqref{eq:master_equation} tends to $\cM_{(t + \varepsilon, t)} = e^{\cL \varepsilon}$~\cite{Rivas2010}. As such, we only need to expand it in the trace norm up to first order to calculate
\begin{equation}
    \lVert(\dI + (\cL \otimes \dI)\varepsilon)\ketbra{\Psi}{\Psi}\rVert_1.
\end{equation}
Computing the vectorized $\cL$~\cite{Landi2024}, with $\tilde H_S(t)$ as in Sec.~\ref{sec:master_equation}, we obtain
\begin{align}
    \cL = -\ii( \dI \otimes \tilde H_S(t) - \tilde H_S(t)^T \otimes \dI ) &+ \Lambda^+(t) \left( (\hat d^\dagger)^* \otimes \hat d^\dagger - \frac{1}{2} \left( \dI \otimes \hat d\hat d^\dagger + (\hat d\hat d^\dagger)^T \otimes \dI \right) \right) \nonumber \\ &+ \Lambda^-(t) \left( \hat d^* \otimes \hat d - \frac{1}{2} \left( \dI \otimes \hat d^\dagger \hat d + (\hat d^\dagger \hat d)^T \otimes \dI \right) \right),
\end{align}
which subsequently can be rewritten in a matrix form
\begin{equation}
    \cL =
    \begin{pmatrix}
    -\Lambda^+(t) & 0 & 0 & \Lambda^-(t) \\
    0 & -\ii H(t) - \frac{\gamma(t)}{2} & 0 & 0 \\
    0 & 0 & \ii H(t) - \frac{\gamma(t)}{2} & 0 \\
    \Lambda^+(t) & 0 & 0 & -\Lambda^-(t)
    \end{pmatrix},
\end{equation}
where $\gamma(t) \equiv \Lambda^+(t) + \Lambda^-(t)$. Computing the Choi matrix~\cite{Choi1975} $\chi_\varepsilon \equiv (\dI + (\cL \otimes \dI)\varepsilon)\ketbra{\Psi}{\Psi}$, we get
\begin{equation}
    \chi_\varepsilon = \frac{1}{2}
    \begin{pmatrix}
    1 - \varepsilon\Lambda^+(t) & 0 & 0 & 1 - \varepsilon\left(\frac{\gamma(t)}{2} - \ii H(t)\right) \\
    0 & \varepsilon\Lambda^-(t) & 0 & 0 \\
    0 & 0 & \varepsilon\Lambda^+(t) & 0 \\
    1 - \varepsilon\left(\frac{\gamma(t)}{2} + \ii H(t)\right) & 0 & 0 & 1 - \varepsilon\Lambda^-(t)
    \end{pmatrix}.
\end{equation}
This is a block matrix with the middle block eigenvalues $\varepsilon\Lambda^+(t)/2$ and $\varepsilon\Lambda^-(t)/2$. To calculate the corner block eigenvalues, one can use the trace and the determinant of this matrix. The trace is equal to
\begin{equation}
    \Tr{\chi_\varepsilon^{(c)}} = 1 - \frac{\varepsilon\gamma(t)}{2},
\end{equation}
while the determinant reads
\begin{align}
    \det\chi_\varepsilon^{(c)} &= \frac{1}{4}\left( (1 - \varepsilon\Lambda^+(t))(1 - \varepsilon\Lambda^-(t)) - \left(1 - \varepsilon\left(\frac{\gamma(t)}{2} + \ii H(t)\right)\right) \left(1 - \varepsilon\left(\frac{\gamma(t)}{2} - \ii H(t)\right) \right) \right) \nonumber \\
    &= \frac{1}{4}\left( 1 - \varepsilon\Lambda^+(t) - \varepsilon\Lambda^-(t) + \cO(\varepsilon^2) - \left(1 - \varepsilon\left(\frac{\gamma(t)}{2} + \ii H(t)\right) - \varepsilon\left(\frac{\gamma(t)}{2} - \ii H(t)\right) + \cO(\varepsilon^2)\right) \right) \nonumber \\
    &= \frac{1}{4}\left( 1 - \varepsilon\gamma(t) - \left(1 - \varepsilon\gamma(t)\right) \right) + \cO (\varepsilon^2) \nonumber \\
    &= \cO (\varepsilon^2),
\end{align}
which vanishes to first order in $\varepsilon$. Thus, the eigenvalues of the corner block are $1 - \varepsilon\gamma(t)/2$ and $\cO (\varepsilon^2)$. The spectrum is therefore
\begin{equation}
    \{\lambda_i\}_i = \left\{\frac{\varepsilon\Lambda^+(t)}{2}, \frac{\varepsilon\Lambda^-(t)}{2}, \cO (\varepsilon^2), 1 - \frac{\varepsilon\gamma(t)}{2}\right\}.
\end{equation}
The intermediate map $\cM_{(t + \varepsilon,t)}$ is CP if and only if $\chi_\varepsilon \succeq 0$~\cite{Choi1975, Rivas2010}. The only eigenvalues which can be negative are $\varepsilon\Lambda^+(t)/2$ and $\varepsilon\Lambda^-(t)/2$. Note that the eigenvalue $\cO (\varepsilon^2)$ contributes $\cO (\varepsilon^2)$ to $\lVert\chi_\varepsilon\rVert_1$, and therefore vanishes in the limit $\varepsilon \to 0^+ $. Following the recipe of~\citet{Rivas2010}, and since $\chi_\varepsilon$ is Hermitian and $\Tr{\chi_\varepsilon} = 1$, we evaluate
\begin{align}
    \lVert\chi_\varepsilon\rVert_1 &= \sum_i |\lambda_i| \nonumber \\
    &= \sum_{\lambda_i \geq 0} |\lambda_i| + \sum_{\lambda_i < 0} |\lambda_i| \nonumber \\
    &= \sum_{\lambda_i \geq 0} \lambda_i - \sum_{\lambda_i < 0} \lambda_i \nonumber \\
    &= \left(1 - \sum_{\lambda_i < 0} \lambda_i\right) - \sum_{\lambda_i < 0} \lambda_i \nonumber \\
    &= 1 - 2\sum_{\lambda_i < 0} \lambda_i \nonumber \\
    &= 1 + 2\left(\max\left\{0, -\frac{\varepsilon\Lambda^+(t)}{2}\right\} + \max\left\{0, -\frac{\varepsilon\Lambda^-(t)}{2}\right\}\right) \nonumber \\
    &= 1 + \varepsilon\left(\max\left\{0, -\Lambda^+(t)\right\} + \max\left\{0, -\Lambda^-(t)\right\}\right).
\end{align}
Thus, we have
\begin{align}
    \cN_\text{RHP} &= \int_0^\infty \lim_{\varepsilon \to 0^+} \frac{\left( 1 + \varepsilon\left(\max\left\{0, -\Lambda^+(t)\right\} + \max\left\{0, -\Lambda^-(t)\right\}\right) \right) - 1}{\varepsilon} \dd{t} \nonumber \\
    &= \int_0^\infty \max\left\{0, -\Lambda^+(t)\right\} + \max\left\{0, -\Lambda^-(t)\right\} \dd{t},
    \label{eq:RHPresult}
\end{align}
which is precisely the result given by~\citet{Hall2014}. This expression also contains the temperature dependence in $\Lambda^+(t)$ and $\Lambda^-(t)$, unlike the BLP and LPP measures.

\section{Long-Time Behavior Below the Threshold} \label{app:longtime}

Below the threshold, the numerical results of Sec.~\ref{sec:results} show two features that the exact solution of Appendix~\ref{app:dynamical_map} does not make evident: revivals of $|G(t)|$ that set in ever later as the threshold is approached, and a divisibility measure that grows without bound with $t_\text{max}$. Both follow from the long-time form of the propagator, which we derive here.

For $|E| < E_c$, only the band integral remains in Eq.~\eqref{eq:G-bound}, and its long-time behavior is set by the band edges, where $\varphi(\omega)$ is not smooth: $\cJ_\text{N}$ vanishes there as a square root, $\pi\cJ_\text{N}\to0$ and $\Re\Sigma(\pm\Omega) = \pm\Gamma/2$, so that $\varphi(\omega) \simeq \cJ_\text{N}(\omega)/(E_c \mp E)^2$ near $\omega = \pm\Omega$. Integrating against $e^{-\ii\omega t}$ near each edge gives
\begin{equation}
    G(t) \simeq \frac{\alpha_+e^{-\ii\Omega t} + \alpha_-e^{\ii\Omega t}}{(\Omega t)^{3/2}},~\alpha_\pm = \frac{\Gamma\Omega e^{\pm3\pi\ii/4}}{2\sqrt{2\pi} (E_c \mp E)^2},
    \label{eq:G-edges}
\end{equation}
up to corrections of relative order $(\Omega t)^{-1}$, while the resonance near $E'$ decays exponentially. The two edge terms beat at frequency $2\Omega$,
\begin{equation}
    \gamma(t) \simeq \frac{3}{t} + \frac{4\Omega|\alpha_+\alpha_-|\cos(2\Omega t)}{|\alpha_+|^2 + |\alpha_-|^2 - 2|\alpha_+\alpha_-|\sin(2\Omega t)},
    \label{eq:gamma-edges}
\end{equation}
whose periodic part has a fixed amplitude, with minimum $-4\Omega|\alpha_+\alpha_-|/\big||\alpha_+|^2 - |\alpha_-|^2\big|$. Once the edges dominate, $\gamma(t)$ therefore turns negative once in every period $\pi/\Omega$, for all $t \gtrsim t_\text{first}$ with
\begin{equation}
    t_\text{first} \simeq \frac{6E_c|E|(E_c^2 + E^2)}{\Omega(E_c^2 - E^2)^2},
    \label{eq:tfirst}
\end{equation}
which diverges as $(E - E_c)^{-2}$ at the threshold.

Because each revival is weighted by $|G| \sim t^{-3/2}$, $\cN_\text{BLP}$, $\cN_\text{LPP}$ and $\cN_\text{MSV}$ converge, whereas the negative part of $\gamma(t)$, and with it $\cN^{(0)}_\text{RHP}$, does not decay (Appendix~\ref{app:NM}).

\twocolumngrid
\bibliographystyle{apsrev4-1fixed_with_article_titles_full_names.bst}
\bibliography{ref}

\end{document}